\DeclareMathOperator*{\argmax}{arg\,max}
\newcommand*\samethanks[1][\value{footnote}]{\footnotemark[#1]}
\newtheorem{lemma}{Lemma}
\title{A Bayesian Generalized CAR Model for Correlated Signal Detection}
\author{D. Andrew Brown\thanks{Corresponding author, Department of Mathematical Sciences, Clemson University, Clemson, SC 29634, Email: ab7@clemson.edu} \and  Gauri S. Datta\thanks{Department of Statistics, University of Georgia, Athens, GA 30602} \and  Nicole A. Lazar\samethanks}
\begin{document}
\maketitle

\thispagestyle{empty}

\begin{abstract}
\noindent Over the last decade, large-scale multiple testing has found itself at the forefront of modern data analysis. In many applications data are correlated, so that the observed test statistic used for detecting a non-null case, or signal, at each location in a dataset carries some information about the chances of a true signal at other locations. Brown, Lazar, Datta, Jang, and McDowell (2014) proposed in the neuroimaging context a Bayesian multiple testing model that accounts for the dependence of each volume element on the behavior of its neighbors through a conditional autoregressive (CAR) model. Here, we propose a generalized CAR model that allows for inclusion of points with no neighbors at all, something that is not possible under conventional CAR models. We consider also neighborhoods based on criteria other than physical location, such as genetic pathways in microarray determined from existing biological knowledge. This generalization provides a unified framework for the simultaneous modeling of dependent and independent cases, resulting in stronger Bayesian learning in the posterior and increased precision in the estimates of interesting signals. We justify the selected prior distribution and prove that the resulting posterior distribution is proper. We illustrate the effectiveness and applicability of our proposed model by using it to analyze both simulated and real microarray data in which the genes exhibit dependence that is determined by physical adjacency on a chromosome or predefined gene pathways.\\

\begin{keywords}
conditional autoregressive model, enrichment, microarray, multiple testing, significance analysis of microarrays, spike-and-slab prior
\end{keywords}
\end{abstract}

\newpage

\thispagestyle{empty}

\section{Introduction}

The analysis of high throughput data presents many challenges to researchers across a variety of disciplines. Many of the problems that must be dealt with are ubiquitous in the sciences but are exacerbated when the datasets are massive in size. Often, the goal is to detect the presence or absence of a signal over an extremely large number of cases, creating a massive multiple testing problem. Prior to the last two decades, most multiple testing procedures were constructed to control an overall error rate for a relatively small number of simultaneous tests \citep*{Efron10}. The advent of high throughput technology quickly revealed, however, that classical procedures can be inappropriate in the presence of thousands of simultaneous tests \citep*{Benjamini2010}.

Suppose our data consist of $J$ cases, each of which arises independently from a normal distribution with case-specific mean, $y_j \sim N(\theta_j, \sigma^2), ~j=1,\ldots,J$. For example, in gene microarray or next-generation RNA sequencing, $y_j$ may be the test statistic quantifying the differential expression of gene $j$ between cancerous and healthy tissue \citep*{EfronTib02, LiTibshirani11}. In functional magnetic resonance imaging (fMRI), signals are observed over time at thousands of points in the brain and a test statistic $y_j$ is calculated at each point $j$ to summarize the observed difference in that area's signal between some stimulus condition versus a baseline \citep*{FristonEtAl95}. Frequently, a question of interest is whether or not $\theta_j = 0$ at each $j$; i.e., a hypothesis test is conducted at each of thousands of locations to determine which of the $\theta_j$ are non-zero, indicating an interesting signal. The problem is to find a statistical procedure which corrects for multiple testing but does not sacrifice too much sensitivity for the sake of preventing false positives.

A similar issue arises in variable selection, where one is interested in determining which (of possibly many) variables contribute in a meaningful way to the observed response. A Bayesian approach is to assume the coefficient corresponding to each variable, $\beta_j$, belongs to one of two distinct classes, the null class in which $\beta_j \sim N(0, \sigma^2)$, or the non-null class, $\beta_j \sim N(\theta_j, \sigma^2), ~\theta_j \neq 0$ \citep*[e.g.,][]{MitchellBeauchamp88, GeorgeMcCulloch93, ScottBerger06, Efron08}. Each $\beta_j$ is assigned an {\em a priori} probability $p$ of belonging to the null class. This value may be regarded as the proportion of all possible cases which would be null so that $p \approx 1$ models very sparse signals, while $p \approx 0$ models abundant signals. This mixing proportion, $p$, is usually unknown, but it can be assigned a prior distribution to reflect the researcher's beliefs about the level of sparsity in the data, or it can be estimated via empirical Bayes. The posterior {\em inclusion probabilities} \citep*{BarbieriBerger04} can subsequently be estimated from the posterior distribution. Placing a Beta prior on $p$ induces a multiplicity adjustment in that the model automatically penalizes for the number of tests in {\em a posteriori} probability statements. \cite*{ScottBerger10} discuss this issue and the conditions under which multiplicity correction can be induced.

Much of the work thus far developed is based on the assumption of independent hypothesis tests. This assumption is untenable in many applications. For example, nontrivial dependence structures are known to exist in neuroimaging data \citep*{LeeEtAl14, ZhangEtAl14}, syndromic surveillance \citep*{BanksEtAl09}, gene microarray \citep*{ZhaoEtAl14}, and RNA sequencing \citep*[RNA-seq;][]{LoveEtAl14}. Correlation can cause the null distribution of the observed test statistics to be over- or under-dispersed relative to the theoretical null under independence. Consequently, either too few or too many test statistics may be declared significant. The deleterious impact correlation can have on empirical Bayes methods and false discovery rate control \citep*[FDR;][]{BenjaminiHochberg95} was investigated in \cite*{QiuEtAl05}. \cite*{Efron07} focused on the effects of dependence on the distribution of test statistics. While some work has been done on incorporating known dependence structure into Bayesian models for identification of interesting cases \citep*[e.g.,][]{SmithFahr07, LiZhang10, StingoEtAl11, LeeEtAl14, ZhaoEtAl14, ZhangEtAl14}, it has been limited, particularly with respect to exploring the multiple testing adjustments incurred through data-dependent estimation of inclusion probabilities.

Many datasets include isolated observations. For example, genes in microarray data share common pathways, but many genes are in no pathway at all. It is important to include as many cases as possible when evaluating the posterior distribution. A standard CAR structure assumes every observation has at least one neighbor, so that one is forced to either use an inappropriate neighborhood structure or exclude isolated points altogether. In the current paper, we extend a model proposed by \cite*{BrownEtAl14} for Bayesian multiple hypothesis testing and discuss more general neighborhood structures to provide a unified treatment of cases with at least one neighbor and with no neighbor. We use a less restrictive improper prior distribution for the variance components and establish the propriety of the posterior distribution of our model, ensuring that inferences are valid. In addition to allowing the newly proposed model to identify isolated non-null cases, we demonstrate that the inclusion of isolated points results in stronger Bayesian learning and improved estimation of the signal strengths of the selected cases.

We briefly motivate a common model used in Bayesian signal detection in Section 2. This leads to our proposed extension of the so-called spike-and-slab prior to accommodate local dependence. We prove the propriety of the posterior distribution of our proposed model and discuss computation. In Section 3, we simulate correlated microarray data to study our model's performance against a prior assuming independence and assuming a conventional CAR structure. We also compare it against results obtained from the significance analysis for microarrays (SAM) procedure \citep*{EfronEtAl01, TusherEtAl01, Efron10}. We apply our procedure to two real gene microarray datasets in Section 4, one using dependence determined by adjacency on a chromosome, and the other with gene pathways defining the neighborhoods. Section 5 concludes with a discussion of the results.

\section{Methods}\label{sec:Methods}

\subsection{Mixture Priors for Multiplicity Adjustment}
To facilitate a Bayesian multiple testing correction, we postulate a ``two groups model" \citep{Efron08}. That is, we assume two possible cases for each of the observed test quantities, reflected in a Bayesian model through the prior on $\theta_j$,
\begin{equation}
    \label{eqn:spikeSlab}
    \pi(\theta_j \mid p, ~\tau^2) = p\delta_0(\theta_j) + (1-p)\varphi_{0,\tau^2}(\theta_j) ~~j=1, \ldots, J,
\end{equation}
where $\delta_0(\cdot)$ is the Dirac delta spike at zero and $\varphi_{0,\tau^2}(\cdot)$ is the Gaussian density function with mean zero and variance $\tau^2$. So-called ``spike-and-slab" or ``spike-and-bell" priors of this form are standard in the Bayesian variable selection framework. They were introduced by \cite{MitchellBeauchamp88} for variable selection in linear regression. The mixture model was used in \cite{Geweke96}, who provided a procedure for selecting models subject to order constraints among the variables included in each model. A similar approach was taken in \cite{GeorgeMcCulloch93}, who treated each regression coefficient as arising from a mixture of two continuous distributions with different variances for stochastic search variable selection. Literature on Bayesian variable selection was reviewed in \cite{ClydeGeorge04} and \cite{OHaraSillanpaa09}. \cite{ScottBerger06} explored Model (\ref{eqn:spikeSlab}) and ways in which it induces multiplicity correction, along with graphical displays and decision rules for subsequent inferences.

By allowing $p$ to be determined by the data, the joint posterior distributions obtained from spike-and-slab priors adapt to the number of tests, resulting in the posterior inclusion probabilities, $p_j := P(\theta_j \neq 0 \mid \mathbf{y})$, being penalized to account for the multiple tests \citep{ScottBerger06, ScottBerger10}. Specifically, \cite{ScottBerger06} used a $\text{Beta}(\alpha, 1)$ prior density on $p$, where $\alpha$ is a specified value, and $\pi(\tau^2, ~\sigma^2) = (\tau^2 + \sigma^2)^{-2}$ as a prior density for the variance components. Under this model, \citet[][Lemma 3]{ScottBerger06} showed that
\begin{equation}\label{eqn:SBInclProbExp}
    p_j = 1 - E\left[\left(1 + \frac{1-p}{p}\sqrt{\frac{\sigma^2}{\sigma^2 + \tau^2}}\exp\left(\frac{y_j^2\tau^2}{2\sigma^2(\sigma^2 + \tau^2)}\right)\right)^{-1}\right],
\end{equation}
where the expectation is taken with respect to the joint posterior distribution of $p$, $\sigma^2$, and $\tau^2$.\\

\subsection{Incorporation of Local Dependence}\label{sub:propModel}
Posterior inference can be sharpened if we exploit correlation among potential predictors in the search for interesting signals. \cite*{BrownEtAl14} proposed allowing the continuous component of (\ref{eqn:spikeSlab}) to share information across observations by reparameterizing $\theta_j$ as $\gamma_j\mu_j$, where $\gamma_j \stackrel{\text{iid}}{\sim}\text{Bern}(1-p)$ and $\mu_j$ is Gaussian, so that $\mathbf{y} \sim N(\boldsymbol{\Gamma\mu}, \sigma^2\mathbf{I})$, where $\mathbf{y} = (y_1, \ldots, y_J)^T, ~\boldsymbol{\Gamma} = \text{diag}\{\gamma_i, ~i=1, \ldots, J\},$ and $\boldsymbol{\mu} = (\mu_1, \ldots, \mu_J)^T$. The lattice structure of datasets such as those arising from fMRI and gene microarray, makes a conditional autoregressive model \citep*[CAR;][]{Besag74} a natural choice for incorporating local dependence into the prior on $\boldsymbol{\mu}$. Since the potential non-null signals are expected to be as much positive as negative, {\em a priori} it is reasonable to assume such signals have zero means. Thus, we consider prior distributions of the form $\mu_j \mid \boldsymbol{\mu}_{_{(-j)}} \sim N\left(\sum_{i=1}^J c_{ji}\mu_i, \tau^2_j\right), ~j= 1,\ldots, J$, where $\boldsymbol{\mu}_{_{(-j)}} = (\mu_1, \ldots, \mu_{j-1}, \mu_{j+1}, \ldots, \mu_J)^T$, $c_{jj} = 0$, and $c_{ji} = 0$ except when cases $j$ and $i$ are neighbors. The intrinsic autoregressive model \citep*[IAR;][]{BesagEtAl91} emerges by taking $c_{ji} = w_{ji}/w_j.$ and $\tau_j^2 = \tau^2/w_j.$, where $w_{ji} \neq 0$ if and only if sites $j$ and $i$ are neighbors and $w_j. = \sum_{i=1}^J w_{ji}$. Under an IAR model for $\boldsymbol{\mu}$, the prior density is given by
\begin{equation}\label{eqn:iarDens}
\pi(\boldsymbol{\mu} \mid \tau^2) \propto \exp\left(-\frac{1}{2\tau^2}\boldsymbol{\mu}^T(\mathbf{D}_w - \mathbf{W})\boldsymbol{\mu}\right),
\end{equation}
where $\mathbf{D}_w = \mbox{diag}\{w_j., ~j= 1, \ldots, J\}$ and $\mathbf{W} = \{w_{ji}\}_{j,i=1}^J$. Note that $(\mathbf{D}_w - \mathbf{W})\mathbf{1} = \mathbf{0}$ so that the precision matrix has a nontrvial nullspace and hence the IAR is improper. However, a ``propriety parameter", $\rho$, can be used in the conditional distributions such that $\mu_j \mid \boldsymbol{\mu}_{_{(-j)}} \sim N(\rho\sum_{i = 1}^Jw_{ji}\mu_i/w_j., ~\tau^2/w_j.)$ with precision matrix $\tau^{-2}(\mathbf{D}_w - \rho\mathbf{W})$. In general, the precision matrix will be nonsingular if $\lambda_1^{-1} < \rho < \lambda_J^{-1}$, where $\lambda_1 < 0$ and $\lambda_J > 0$ are the smallest and largest eigenvalues of $\mathbf{D}_w^{-1/2}\mathbf{W}\mathbf{D}_w^{-1/2}$, respectively \citep*{BanerjeeEtAl04}.

Any data that have a lattice structure with known or suspected correlations occurring along predefined networks can be modeled with a CAR model. For instance, genes in microarray are known to express themselves in clusters along a chromosome \citep*[e.g.,][]{XiaoEtAl09}, or to behave in concert along specific gene pathways \citep*{SubramanianEtAl05}. Thus, neighborhoods can be defined in terms of adjacency on a chromosome or based on genes sharing certain predefined pathways determined from prior knowledge. Care should be taken in defining neighborhoods, though, as microarray / RNA-seq datasets often include genes that are not members of any known pathway and thus are isolated. Including isolated points in the IAR induces zero rows in the precision matrix, a problem that cannot be fixed with a propriety parameter. In response, we adjust the neighborhood weights to allow for inclusion of the isolated points while avoiding a singular precision matrix.

We modify the usual IAR model by defining the neighborhood weights about $\mu_j$ to be $c_{ji} = w_{ji}/(d + w_j.)$ with conditional variance $\tau^2/(d + w_j.)$, where $d > 0$. The consequent precision matrix is $\tau^{-2}(\mathbf{D}_w + d\mathbf{I} - \mathbf{W})$. Then $\mathbf{x}^T(\mathbf{D}_w + d\mathbf{I} - \mathbf{W})\mathbf{x} = \sum_{i=1}^J d x_i^2 + (1/2)\sum_i \sum_j w_{ij}(x_i - x_j)^2 \geq 0$, with strict inequality for $\mathbf{x} \neq \mathbf{0}$. Thus, with $d> 0$, we are able to include isolated points in the model while maintaining the propriety of the distribution. If we take $d = 1$, then for any isolated point $j^{\prime}$, $w_{j^{\prime}}. = 0$ so that $E(\mu_{j^{\prime}} \mid \boldsymbol{\mu}_{_{(-j^{\prime})}}) = 0$, $Var(\mu_{j^{\prime}} \mid \boldsymbol{\mu}_{_{(-j^{\prime})}}) = \tau^2$, and $\mu_{j^{\prime}} \mid \tau^2 \sim N(0, \tau^2)$ independently of other points. Hence, we can facilitate conditional independence of $\mu_j$ while allowing all $J$ points to share information about plausible values of the hypervariance through the prior distribution on the variance components. Taking this view, we can express the traditional IAR model as a special case in which every point in a dataset has at least one neighbor and $d = 0$.

Let the joint density of the data and parameters be given by $f(\mathbf{y}, \boldsymbol{\psi}, \boldsymbol{\mu}, p, \boldsymbol{\gamma})$, where $\boldsymbol{\psi}$ contains nuisance parameters modeled in the prior distribution. When $\gamma_j = 0$ for all $j$, $\boldsymbol{\mu}$ does not appear in the resulting likelihood and thus is Bayesianly unidentified \citep*{GelfandSahu99, EberlyCarlin00}. This means that $\int_{\boldsymbol{\mu}}f(\mathbf{y}, \boldsymbol{\psi} \mid \boldsymbol{\gamma} = \mathbf{0}, \boldsymbol{\mu})\pi(\boldsymbol{\mu})d\boldsymbol{\mu} \equiv \int_{\boldsymbol{\mu}}f(\mathbf{y}, \boldsymbol{\psi} \mid \boldsymbol{\gamma} = \mathbf{0})\pi(\boldsymbol{\mu})d\boldsymbol{\mu} = f(\mathbf{y}, \boldsymbol{\psi} \mid \boldsymbol{\gamma} = \mathbf{0})\int_{\boldsymbol{\mu}}\pi(\boldsymbol{\mu})d\boldsymbol{\mu}$, so that we must have a proper prior on $\boldsymbol{\mu}$ for the posterior distribution to be proper, making the inclusion of $\rho$ necessary when $d = 0$. See \citet*[][Chapter 4]{McLachlanPeel00} for further discussion of prior distributions in finite mixture models.

Usually, there is little direct information available about $\rho$, so estimating it may be difficult. Previous work has shown that appreciable interaction between adjacent points only occurs when $\rho$ is close to its upper bound under the $d = 0$ model \citep*[][]{BanerjeeEtAl04}. To give the data more freedom in determining the spatial association without specific regard for interpretability, we consider the prior $\pi_{\rho}(\rho) \propto I(\lambda_1^{-1} < \rho < \lambda_J^{-1})$. It is important to note that inclusion of $\rho$ is still possible when $d > 0$, provided that $\rho$ is bounded between the reciprocals of the smallest and largest eigenvalues of $(\mathbf{D}_w + d\mathbf{I})^{-1/2}\mathbf{W}(\mathbf{D}_w + d\mathbf{I})^{-1/2}$.

An additional advantage of including $\rho$ in the joint model for $\boldsymbol{\mu}$ is that, under positive spatial association, the posterior distribution becomes insensitive to the choice of $d$ in the neighborhood weights. This is because as $d$ grows, $\rho$ is allowed to increase as well. In other words, if $d_1 < d_2$, then $\lambda_{J,1}^{-1} < \lambda_{J,2}^{-1}$, where $\lambda_{J,i} > 0$ is the maximum eigenvalue of $(\mathbf{D}_w + d_i\mathbf{I})^{-1/2}\mathbf{W}(\mathbf{D}_w + d_i\mathbf{I})^{-1/2}, ~i= 1,2$. A proof of this fact may be found in the Supplementary Material.

We also wish to avoid strong information about either the noise variance, $\sigma^2$, or the hypervariance, $\tau^2$. \cite*{Gelman06} suggested that the priors specified for the variance parameters in hierarchical models may have a disproportionate effect in that they can impose strong restrictions on posterior inference. Conversely, priors used for scale hyperparameters that are intended to be noninformative may, in fact, be {\em too} weak by placing considerable probability on unreasonable extreme values in the posterior. To address this possibility, \cite*{Gelman06} proposed the use of a weakly informative (vague but proper) prior on the scale hyperparameter such as the folded-$t$ distribution. \cite*{ScottBerger06} argued for the use of a joint prior on $\sigma^2$ and $\tau^2$ with density $\pi_{(\tau^2, \sigma^2)}(\tau^2, ~\sigma^2) = (\tau^2 + \sigma^2)^{-2} = (\sigma^2)^{-1}(1 + \tau^2/\sigma^2)^{-2}(\sigma^2)^{-1} \equiv \pi_{\tau^2 \mid \sigma^2}(\tau^2 \mid \sigma^2)\pi_{\sigma^2}(\sigma^2)$ so that a standard improper prior on $\sigma^2$ can be used while scaling $\tau^2$ by $\sigma^2$. The prior on $\tau^2 \mid \sigma^2$ is similar to the prior on $\tau^2$ which results from placing a folded-$t_2$ prior on $\tau$ with scale $\sigma$. Supplementary Figure \ref{fig:SBvHalfT} illustrates a slight difference between the two priors for small values of $\tau^2$ so that the Scott-Berger (SB) prior on $\tau^2$ is slightly less informative than a folded-$t$. However, the SB prior and the folded-$t$-based prior are tail equivalent in the sense that the ratio of the two densities is $O(1)$ as $\tau^2 \rightarrow \infty$. Both priors attempt to balance the freedom of the data to inform about this parameter against the need to protect against unreasonable values. We thus follow the precedent set by \citet*{ScottBerger06} and use the same joint prior distribution on $\tau^2$ and $\sigma^2$.

This leads to the following model:
\begin{equation}
    \begin{aligned}
        y_j \mid \gamma_j, ~\mu_j, ~\sigma^2 &\stackrel{\text{indep}}{\sim} N(\gamma_j\mu_j, \sigma^2); ~~~~\gamma_j \mid p \stackrel{\text{iid}}{\sim} \text{Bern}(1-p), ~~j= 1, \ldots, J\\
        \mu_j \mid \boldsymbol{\mu}_{_{(-j)}}, ~\tau^2, ~\rho &\sim N\left(\sum_{i = 1}^J\frac{\rho w_{ji} \mu_i}{d + w_j.}, ~\frac{\tau^2}{d + w_j.}\right), ~~d \geq 0, ~~j= 1, \ldots, J\\
        p &\sim \text{Beta}(\alpha,1), ~~\alpha \geq 1; ~~~~\rho \sim \text{Unif}(\nu^{-1}_1, ~\nu^{-1}_{J})\\
        \pi_{\tau^2\mid\sigma^2}(\tau^2\mid\sigma^2) &= \left(\frac{1}{\sigma^2}\right)\left(1+\frac{\tau^2}{\sigma^2}\right)^{-2}, ~~\tau^2 > 0; ~~~~\pi_{\sigma^2}(\sigma^2) = \frac{1}{\sigma^2}, ~~\sigma^2 > 0,\\
    \end{aligned}\label{eqn:propModel}
\end{equation}
where $\nu_1$ and $\nu_J$ are the smallest and largest eigenvalues of $(\mathbf{D}_w + d\mathbf{I})^{-1/2}\mathbf{W}(\mathbf{D}_w + d\mathbf{I})^{-1/2}$, respectively. Since we are using an improper prior in (\ref{eqn:propModel}), the posterior density is not guaranteed to be integrable. We provide a proof in the Appendix that the posterior distribution is indeed proper.

The practical effect of $\rho$ having room to increase along with $d$ in our proposed model can be seen through simulation. We simulate a $20\times 20$ array of observations arising from both null and non-null distributions. The activation pattern is created by drawing from an Ising distribution \citep*[e.g.,][]{Higdon94}, $p(\mathbf{x}) \propto \exp(\beta\sum_{i \sim j}I(x_i = x_j)), ~\mathbf{x} \in \{0,1\}^{400}$, with interaction parameter $\beta = 0.35$. The null cases ($x_i = 0$) are drawn from $N(0,1)$ and the non-null ($x_i = 1$) cases are drawn from $N(3.5, 1)$. The binary activation pattern and simulated data array are displayed in Supplementary Figures \ref{fig:ActPattern} and \ref{fig:SimDatArray}. We use these data to estimate the posterior distributions under model (\ref{eqn:propModel}) with $d = 0, ~1, ~\text{and} ~5$. See Subsection 2.3 for implementation details. A descriptive measure of spatial association is Moran's $I$, $I(\mathbf{y}) = n\sum_i\sum_jw_{ij}(y_i - \overline{y})(y_j - \overline{y})/[(\sum_{i\neq j}w_{ij})\sum_i(y_i - \overline{y})^2]$, where values away from zero are evidence of spatial association according to the predefined neighborhood structure \citep*[e.g.,][Sec. 4.1]{BanerjeeEtAl04}. Figure \ref{fig:3Rhos} displays smoothed histograms of realizations of $I(\mathbf{y}^{\ast})$ from 2,000 replications each from the three respective posterior predictive distributions, $p(\mathbf{y}^{\ast} \mid \mathbf{y}) = \int_{\boldsymbol{\Theta}} f(\mathbf{y}^{\ast} \mid \boldsymbol{\theta})\pi(\boldsymbol{\theta} \mid \boldsymbol{y})d\boldsymbol{\theta}$, along with the approximate marginal posterior densities of $\rho$. For each value of $d$, the posterior of $\rho$ tends to concentrate near its upper bound and the posterior predictive densities of $I$ are nearly indistinguishable. Regardless of the value of $d$, $\rho$ adjusts accordingly and the overall spatial association is consistent with the data.
\begin{figure}[htb]  
    \centering
    \caption{Smoothed histograms of 2,000 realizations of Moran's $I$ from the corresponding posterior predictive distributions (left panel) and estimated marginal posterior distributions of $\rho$ under model (\ref{eqn:propModel}) (right panel). The dark vertical line in the left panel is at the observed value, $I(\mathbf{y})$.}
        \includegraphics[scale= 0.45, trim= 0in 18pt 0in 0in]{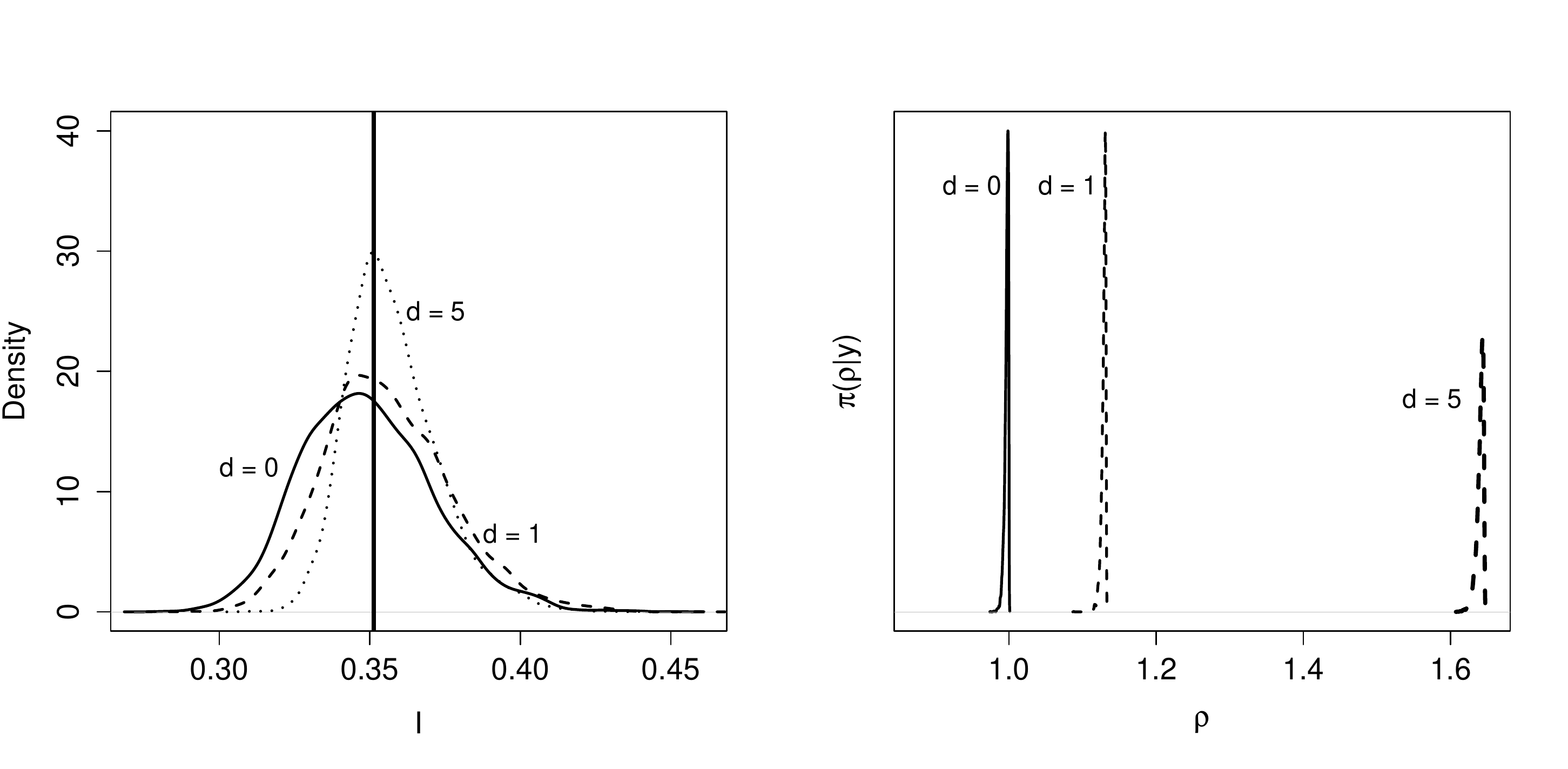}
        \label{fig:3Rhos}
\end{figure}

Generally speaking, including as many cases as possible when evaluating the posterior distribution is important. It is often the case that the dataset to be analyzed contains subsets of correlated observations among many independent observations. A standard CAR structure assumes every observation has at least one neighbor so that the only possibilities for dealing with both types of data are to force every data point into the model via an inappropriate neighborhood structure or to exclude isolated cases. Using an inappropriate dependence structure is clearly undesirable, and excluding isolated points can result in discarding a large proportion of the data. With our proposed approach, adjusting the weights with $d > 0$ in the denominator allows for the inclusion of all cases when evaluating the posterior distribution while simultaneously preserving the dependence among the cases sharing common networks as well as the independence of the isolated cases. In the sequel, we consider the performance of our model when $d = 0$ or $d = 1$. These are admittedly ad hoc values, and may not be appropriate for all situations.\\

\subsection{Computational Implementation}\label{sub:computation}

We facilitate the Gibbs sampling algorithm \citep*{GemanGeman84} by reparameterizing the model as $\tau^2 = \eta\sigma^2$. For ease of notation, define $\mathbf{D}^{\ast}_w := \mathbf{D}_w + d\mathbf{I}$. Then
\begin{eqnarray*}
\pi(\boldsymbol{\mu}, \boldsymbol{\gamma}, \sigma^2, \eta, \rho, p \mid \mathbf{y}) &\propto& (\sigma^2)^{-J-1}\exp\left(-\frac{(\mathbf{y}-\boldsymbol{\Gamma\mu})^T(\mathbf{y}-\boldsymbol{\Gamma\mu})}{2\sigma^2}\right)\\
    &&\times |\eta(\mathbf{D}^{\ast}_w-\rho\mathbf{W})^{-1}|^{-1/2}\exp\left(-\frac{\boldsymbol{\mu}^T(\mathbf{D}^{\ast}_w - \rho\mathbf{W})\boldsymbol{\mu}}{2\eta\sigma^2}\right)\\
    &&\times p^{J - \sum_{i=1}^J\gamma_i + \alpha-1}(1-p)^{\sum_{i=1}^J\gamma_i}(1+\eta)^{-2}I(\nu_1^{-1} < \rho < \nu_J^{-1}).
\end{eqnarray*}
\noindent The full conditional distribution of $\boldsymbol{\mu}$ is $N(\mathbf{R}\boldsymbol{\Gamma}\mathbf{y}, ~\sigma^2\mathbf{R})$, where $\mathbf{R} = (\boldsymbol{\Gamma} + \eta^{-1}(\mathbf{D}^{\ast}_w - \rho\mathbf{W}))^{-1}$ \citep*{CarlinLouis09}. To avoid matrix inversion with extremely large $J$, we update $\boldsymbol{\mu}$ element-wise. The full conditional distributions are given in the Supplementary Material.

We use rejection sampling to draw from the full conditional distribution of $\eta$. However, the importance ratio determining the acceptance probability is $\eta^2/(1+\eta)^2 \rightarrow 0$ as $\eta \rightarrow 0$. This means that the iterations can slow down on this step with candidate densities that concentrate on extremely small values of $\eta$. Possible alternatives are an adaptive Metropolis algorithm \citep*{HaarioEtAl05} or adaptive rejection sampling \citep*{GilksWild92}.

Since $\rho$ has bounded support, we follow the suggestion of \cite*{CarlinBanerjee03} and use slice sampling \citep*{Neal03} to sample from the full conditional distribution of $\rho$. Our experience is that the algorithm performs better with the ``doubling" procedure outlined by \cite*{Neal03} to adaptively determine good proposal intervals. Note that the determinant $|\mathbf{D}^{\ast}_w - \rho\mathbf{W}|^{1/2} \propto |\mathbf{I} - \rho(\mathbf{D}^{\ast}_w)^{-1/2}\mathbf{W}(\mathbf{D}^{\ast}_w)^{-1/2}|^{1/2}$ in the conditional density of $\rho$ can be quickly computed using the eigenvalues of $(\mathbf{D}^{\ast}_w)^{-1/2}\mathbf{W}(\mathbf{D}^{\ast}_w)^{-1/2}$, which do not depend on any unknown parameters. These eigenvalues will have already been computed if using the proposed model in (\ref{eqn:propModel}). Matrix computations can also be eased by calculating $\boldsymbol{\mu}^T\mathbf{D}^{\ast}_w\boldsymbol{\mu} = \sum_{j=1}^J(d + w_j.)\mu_j^2$ and $\boldsymbol{\mu}^T\mathbf{W}\boldsymbol{\mu}$ before searching for an acceptable update for $\rho$. These only need to be calculated once for each Gibbs iteration.

From (\ref{eqn:fullConditionals}) in the Supplementary Material, we can see the strong dependence between $\boldsymbol{\gamma}$ and $p$ in their full conditional distributions. On the $k^{\text{th}}$ iteration of a Gibbs sampling routine, if the sample draw $p^{(k)}$ is extremely close to 1, then most of the draws $\gamma_i^{(k)}, ~i= 1,\ldots,J$, will be zero. But then $\sum_i \gamma_i$ will be close to zero so that the conditional Beta density will concentrate close to 1, leading to another high value of $p$, and so on. Thus, in spite of the computationally convenient conditional conjugacy, an MCMC routine can get stuck in the region of the parameter space with most $\gamma_i = 0$, which slows. This situation can be ameliorated by reparameterizing the mixing proportion $p$ to eliminate boundary constraints, for instance a logit transformation $\pi := \log(p/(1-p))$, and using Langevin-Hastings proposals for $\pi$ to push the chain toward the posterior mode \citep[][Ch. 6]{GilksEtAl96}. Occasionally mixing in ordinary Metropolis proposals in place of Langevin-Hastings proposals offers further improvements when the chain is far from the mode \citep[][Ch. 3]{CarlinLouis09}.

The quantities of interest are the marginal posterior inclusion probabilities for each signal $j$, $P(\gamma_j = 1 \mid \mathbf{y})$. To estimate this from the posterior sample draws, we recognize that in Model (\ref{eqn:propModel}), $P(\gamma_j = 1 \mid \mathbf{y}) = E(p_j^{\ast})$, where the expectation is taken with respect to $\pi(\boldsymbol{\mu}, \boldsymbol{\gamma}_{(-j)}, \sigma^2, \eta, \rho, p, \mid \mathbf{y})$, and $p_j^{\ast} := P(\gamma_j = 1 \mid \boldsymbol{\mu}, \boldsymbol{\gamma}_{(-j)}, \sigma^2, \eta, \rho, p, \mathbf{y})$ is given by
\begin{equation}\label{eqn:CARModProbs}
p_j^{\ast} = \frac{(1-p)\varphi_{0,\sigma^2}(y_j - \mu_j)}{(1-p)\varphi_{0,\sigma^2}(y_j - \mu_j) + p\varphi_{0,\sigma^2}(y_j)}.
\end{equation}
This quantity can be estimated by $N^{-1}\sum_{i=1}^N\hat{p}_j^{\ast,(i)}$, where $\hat{p}_j^{\ast,(i)}$ is the plug-in estimate of $p_j^{\ast}$ evaluated with the $i^{th}$ draws $p^{(i)}, \mu_j^{(i)}, \sigma^{2,(i)}$ from the approximate joint posterior, and $N$ is the Monte Carlo sample size. Similarly, we can use (\ref{eqn:SBInclProbExp}) to estimate the inclusion probabilities under the Scott-Berger model using $p^{(i)}, \sigma^{2,(i)}, \tau^{2, (i)}$ drawn from the appropriate posterior. Both of these estimators are ``Rao-Blackwellized"  in the sense of \cite*{GelfandSmith90} and thus have smaller Monte Carlo variance than other more naive estimators \citep*[][Ch. 3]{CarlinLouis09}.

\section{Simulation Studies}\label{sec:Simulation}
To evaluate the performance of our model, we simulate a dataset in a manner similar to that which was carried out in \cite*{XiaoEtAl09}, resulting in similar correlation structure as sometimes arises between genes on chromosomes. We consider a situation in which 1,000 gene expressions are recorded for ten subjects. For the $j^{\text{th}}$ gene on the $i^{\text{th}}$ subject, $i= 1, \ldots, 10, j= 1, \ldots, 1000$, the observed expression level $X_{ij}$ is drawn from $N(\mu_{ij}, 1)$. Five of the subjects are taken as controls with baseline (i.e., null case) expression levels over all 1,000 genes, so that $\mu_{ij} = 0$ for $i=1, \ldots, 5, ~j= 1, \ldots, 1000$. The remaining five ``treatment" subjects' data are simulated so that 100 genes are differentially expressed. That is, for $i= 6, \ldots, 10$, $\mu_{ij} = 1.5, ~j= 1, \ldots, 20, 111, \ldots, 130, 211, \ldots, 230$ and $\mu_{ij} = -1.5, ~j= 311, \ldots, 330, 411, \ldots, 430$. For each control subject, we generate the gene expression levels by drawing the vector of observations $(X_{i,1}, \ldots, X_{i,1000})^T = \mathbf{X}_i \sim N(\mathbf{0}, \mathbf{I}), ~i= 1, \ldots, 5$. For the treatment group, the test statistics belonging to the null class are again simulated as i.i.d. standard normal. We model correlation among the differentially expressed (non-null) cases by drawing each group of twenty test statistics as $\mathbf{X}^{(k)}_i \sim N(\boldsymbol{\mu}^{(k)}, \boldsymbol{\Sigma}), ~i= 6, \ldots, 10$, where $\mathbf{X}^{(k)}_i, k= 1, \ldots, 5,$ is the $k^{th}$ cluster of non-null cases, (i.e., $\boldsymbol{\mu}^{(k)} = (1.5, 1.5, \ldots, 1.5)^T, k= 1, 2, 3,$ $\boldsymbol{\mu}^{(k)} = (-1.5, -1.5, \ldots, -1.5)^T, ~k= 4, 5$), and $\boldsymbol{\Sigma} = \{ 0.9^{|i-j|} \}_{i,j=1}^{20}$. This results in null cases that are independent of each other, but correlated non-null cases. Pooled $t$ statistics, $t_j, ~j = 1, \ldots, 1000$, are then calculated between the control and treatment conditions for each gene and subsequently normalized via probit transformation of the $p$-values \citep*{Efron10}, yielding test statistics $\mathbf{y} = (y_1, \ldots, y_{1000})^T$, where $y_j = \Phi^{-1}(F(t_j))$ with $F(\cdot)$ being the distribution function of the $t$ statistics.

We analyze the simulated data using both our proposed model and the Scott-Berger (SB) model assuming independence. In our model, we consider the sharing of information across genes using three different neighborhood structures. These neighborhoods are displayed graphically in Supplementary Figure \ref{fig:nbhds}. The associated adjacency matrices for these neighborhoods are
\begin{eqnarray*}
\mathbf{W}_1 &=& \left[ \begin{matrix}
            0 & 1 & 0 & \ldots & 0 & 0 \\
            1 & 0 & 1 & \ldots & 0 & 0 \\
            0 & 1 & 0 & 1 & \ldots & 0 \\
            \vdots & & & \ddots & & \vdots \\
            0 & 0 & 0 & \ldots & 1 & 0 \\
            \end{matrix} \right], ~~\mathbf{W}_2 = \left[ \begin{matrix}
            0 & 1 & 1 & 0 & 0 & \ldots & 0 & 0 & 0\\
            1 & 0 & 1 & 1 & 0 & \ldots & 0 & 0 & 0\\
            1 & 1 & 0 & 1 & 1 & \ldots & 0 & 0 & 0\\
            \vdots & & & & & \ddots & & \vdots &  \\
            0 & 0 & 0 & 0 & 0 & \ldots & 1 & 1 & 0 \\
            \end{matrix} \right],  \\
\mathbf{W}_3 &=& \left[ \begin{matrix}
            0 & 1 & \frac{1}{2} & \frac{1}{3} & 0 & 0 & \ldots & 0 & 0 & 0\\
            1 & 0 & 1 & \frac{1}{2} & \frac{1}{3} & 0 & \ldots & 0 & 0 & 0\\
            \frac{1}{2} & 1 & 0 & 1 & \frac{1}{2} & \frac{1}{3} & \ldots & 0 & 0 & 0\\
            \vdots & & & & & & \ddots & & \vdots \\
            0 & 0 & 0 & 0 & 0 & 0 & \ldots & \frac{1}{2} & 1 & 0 \\
            \end{matrix} \right]. \\
\end{eqnarray*}
For further comparison, we implement also the Significance Analysis for Microarrays (SAM) procedure as outlined in \cite*{Efron10}. SAM is a popular method for analyzing microarray data designed to approximately control the false discovery rate (FDR). For this procedure, we vary the FDR criterion between 0.05 and 0.15 to study performance across a range of FDR levels commonly used in practice. See \citet*[][Chapter 4]{Efron10} for further details of the SAM procedure.


Each neighborhood is one in which every location has at least one neighbor so that, in (\ref{eqn:propModel}), $w_j. > 0$ for all $j$. We take $d = 0$, reducing to the usual IAR model like the one considered in \citet*{BrownEtAl14}. To enforce sparsity {\em a priori}, we choose a high value of $\alpha$ in the prior on $p$ so that its prior probability mass is concentrated close to 1. We take $\alpha = 150$ here. In the SB model, we find that the best results are obtained using a uniform prior on $p$, with higher values of the shape parameter negatively affecting the model's ability to detect activation. Lastly, note that the data generating mechanism here is different from what is assumed under either our proposed model or the SB model, allowing us to study robustness, as well.

For both Bayesian models, we simulate the posterior distributions via MCMC. We implement the SB model using Gibbs sampling with nested rejection sampling steps for the non-standard distributions. To draw from the posterior of our proposed model, we use the conditional distributions given in the Supplementary Material for a Gibbs sampling procedure with nested rejection and slice sampling steps as described in Subsection 2.3. The algorithms are coded entirely in \verb|R| \citep*{RCore}. For our proposed model, a single chain uses a burn-in period of 5,000 iterations followed by an additional 10,000 sampling iterations, thinning to every fifth draw for a final sample of size 2,000. We run three chains in parallel and assess convergence with trace plots and scale reduction factors for selected parameters \citep*{GelmanRubin92}. Upon attaining approximate convergence, the retained draws from each of the three chains are combined for a final Monte Carlo sample size of 6,000. Similarly, we run parallel chains, assess convergence, and combine the draws to simulate the posterior distribution of the SB model. The posterior inclusion probabilities are estimated using (\ref{eqn:SBInclProbExp}) for the SB model and (\ref{eqn:CARModProbs}) for our proposed model. To select simulated genes as differentially expressed, we threshold the estimated posterior inclusion probabilities at 0.95.

Table \ref{tab:simArrayErrors} displays the empirical error rates for the SB model, the SAM procedure, and our model using $\mathbf{W}_1$, $\mathbf{W}_2$, and $\mathbf{W}_3$ above as neighborhoods. Of these models, we observe that the SB model results in the highest overall misclassification proportion, due to the false non-discoveries. In this case, the SB model is overcorrecting for multiplicity to the point that no cases are selected at all (hence the identically zero false discovery proportion). The SAM procedure performs slightly better in terms of non-discoveries and overall misclassification proportion, at the expense of false positives accounting for 13\% - 16\% of all discoveries. In contrast, our proposed model performs better, regardless of the selected neighborhood structure. The first-order neighborhood with unit weights ($\mathbf{W}_1$; top illustration in Supplementary Figure \ref{fig:nbhds}) performs best both in terms of non-discoveries and false discoveries, but all of the error rates are very close when compared to the other two approaches.

\begin{table}[h!]
    \centering
    \caption{False non-discovery proportions (FNP), false discovery proportions (FDP), and misclassification proportions (MCP) for the simulated gene expression data}
    \begin{tabular}{c c c c c c c c}
        & SB & SAM(0.05) & SAM(0.10) & SAM(0.15) & CAR ($\mathbf{W}_1$) & CAR ($\mathbf{W}_2)$ & CAR ($\mathbf{W}_3$) \\
        \hline
        FNP \hfill \vline & 0.1000 & 0.0938 & 0.0883 & 0.0856 & 0.0546 & 0.0577 &  0.0616 \\
        FDP \hfill \vline & 0.0000 & 0.1250 & 0.1333 & 0.1579 & 0.0000 & 0.0217 &  0.0238 \\
        MCP \hfill \vline & 0.1000 & 0.0940 & 0.0890 & 0.0870 & 0.0520 & 0.0560 &  0.0600 \\
        \hline
    \end{tabular}\label{tab:simArrayErrors}
\end{table}

Figure \ref{fig:ROCs} displays the empirical receiver operating characteristic (ROC) curves for the SB, SAM, and CAR($\mathbf{W}_1$) models. The ROC curves for the CAR($\mathbf{W}_2$) and CAR($\mathbf{W}_3$) models are virtually indistinguishable from the CAR($\mathbf{W}_1$) curve and so are not displayed. We see that the SB testing model and the SAM procedure are comparable in terms of overall discriminatory power. The approximate areas under the curves (AUC) are given in Table \ref{tab:AUCs}. While both of these procedures yield sensitivity slightly superior to that of our model at lower levels of specificity (less than about 0.40), ours still captures more area under its curve. In particular, note that our model attains a very sharp increase in sensitivity at very high levels of specificity.
\begin{figure}[h!]  
    \centering
    \caption{Empirical ROC curves for the simulated microarray data}
        \includegraphics[scale= 0.4, trim= 0in 18pt 0in 0in]{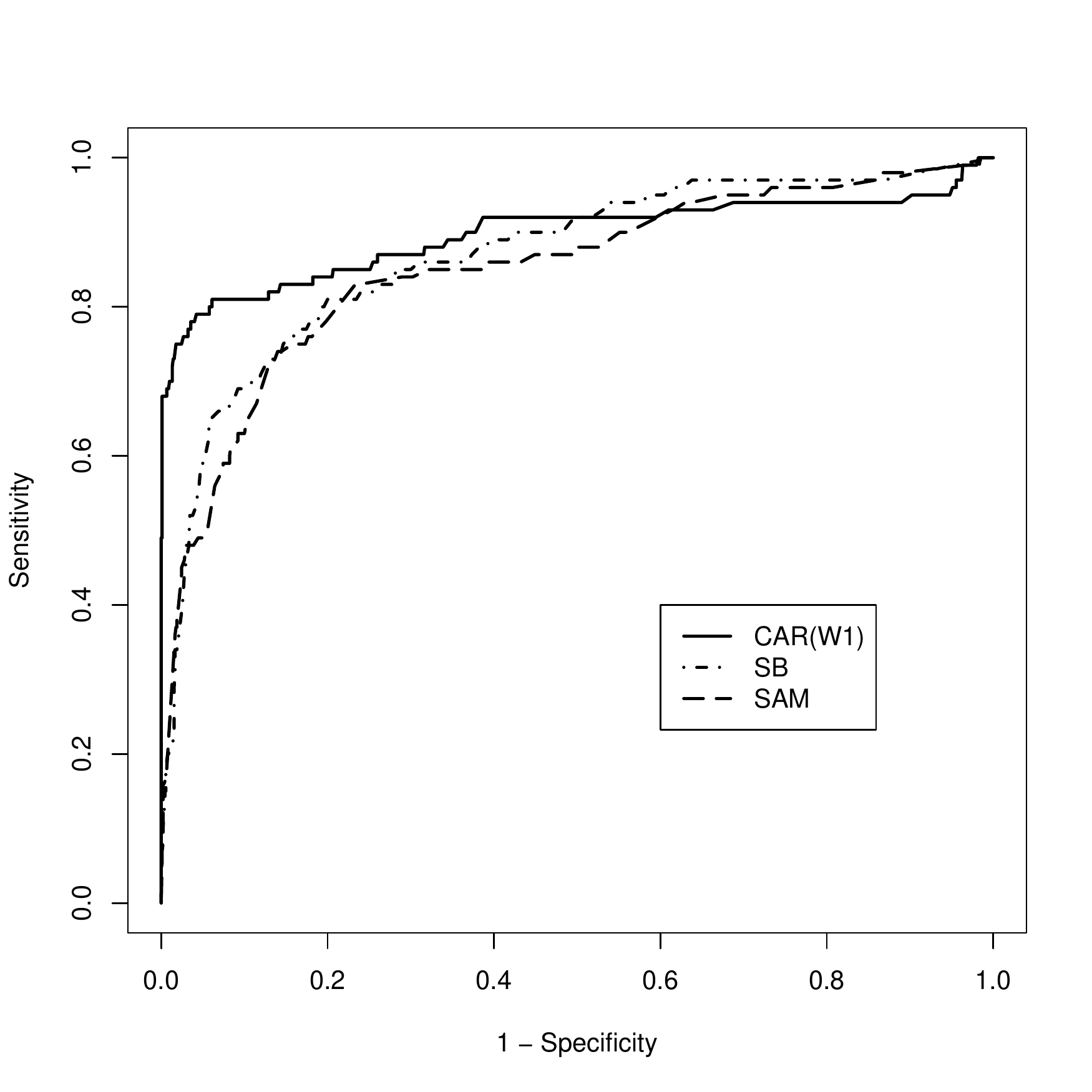}
        \label{fig:ROCs}
\end{figure}
\begin{table}[htb]
    \centering
    \caption{Approximate areas under each ROC curve displayed in Figure \ref{fig:ROCs}.}
    \begin{tabular}{c | c}
        Procedure & AUC \\
        \hline
        CAR($\mathbf{W}_1$) & 0.8969 \\
        SB & 0.8686 \\
        SAM & 0.8502 \\
    \end{tabular}\label{tab:AUCs}
\end{table}

Insight into the reasons for the discrepancies between our model and the SB model can be gained by examining the smoothed approximate posterior densities of $p$, $\sigma^2$, and $\tau^2 = \eta\sigma^2$, displayed in Figure \ref{fig:smthPost}. Incorporating the dependence results in much stronger Bayesian learning about these parameters. In this simulation, 100 out of 1,000 cases are non-null, so that we expect $p$ to be large, though not exactly 0.9 since correlation among the cases reduces the effective sample size \citep*[][Ch. 3]{CarlinLouis09}. That is indeed the case under our model. The SB model, on the other hand, results in considerably lower estimates of $p$, along with weakly identified distributions of $\sigma^2$ and $\tau^2$. This weak identifiability contributes to the error rates observed in Table \ref{tab:simArrayErrors}.
%
\begin{figure}[htb]  
    \centering
    \caption{Smoothed posterior estimates of $p$, $\sigma^2$, and $\tau^2$ for the simulated microarray data.}
        \includegraphics[scale= 0.35, trim= 0in 18pt 0in 0in]{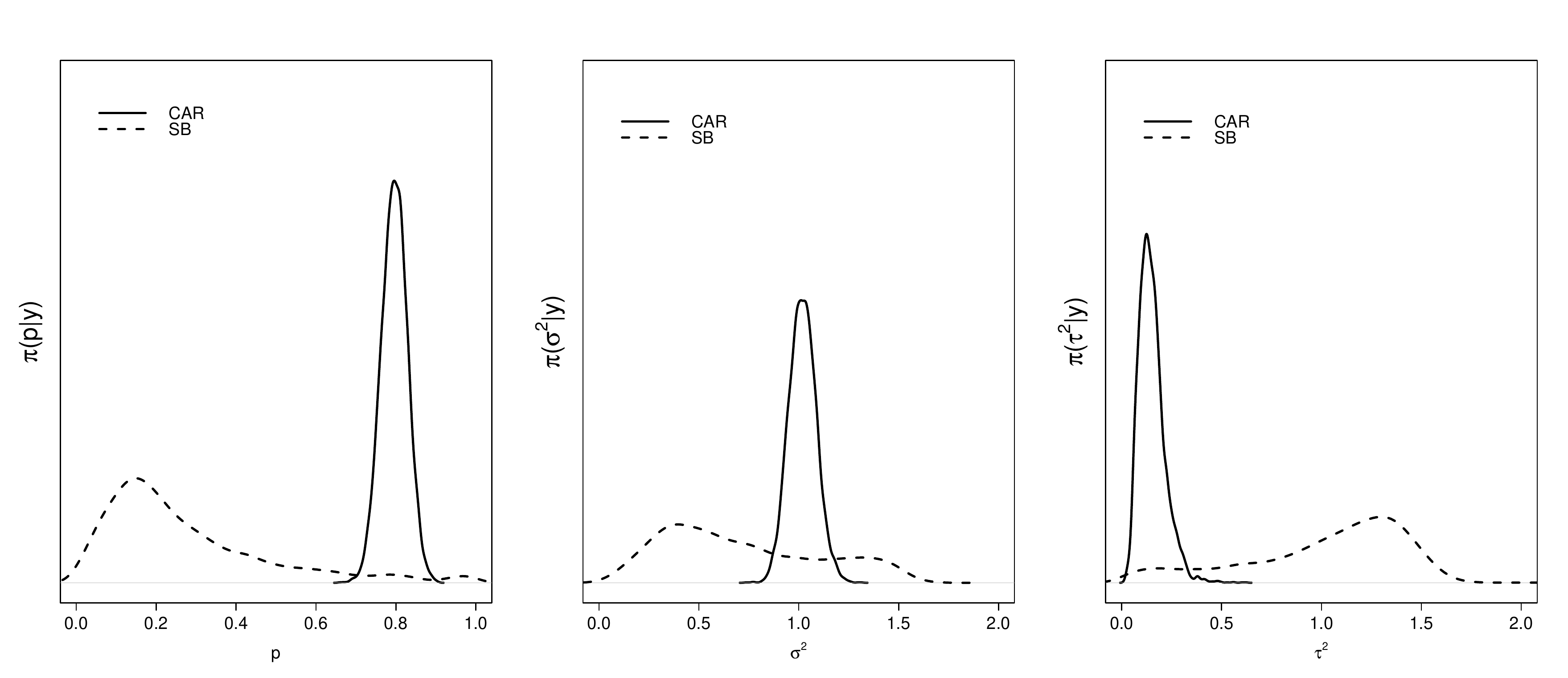}
        \label{fig:smthPost}
\end{figure}

Supplementary Figure \ref{fig:PiZi} plots the estimated inclusion probabilities versus the observed test statistics. All of the test statistics are assigned relatively high probabilities of being non-null under the SB model, but the lack of information about $\sigma^2$ and $\tau^2$ prevent distinctions being drawn that are strong enough to pass a 0.95 inclusion probability threshold. Our approach, on the other hand, results in stronger statements about the likelihoods of cases being non-null. The `jagged' quality of the curve corresponding to the CAR($\mathbf{W}_1$) model is due to the estimated inclusion probabilities being not a function of the $y_j$ values alone, but also of their location with respect to nearby observed values. To see this, consider the circled point in Supplementary Figure \ref{fig:PiZi}, which corresponds to the identified statistic in the graphical depiction of the test statistics in Supplementary Figure \ref{fig:pointZ}. This relatively extreme observation is in the middle of uninteresting test statistics. This is a truly null case, so the incorporation of a local dependence structure helps to prevent a false discovery.
%
%

In addition to neighborhoods determined by physical adjacency, it may be desirable to facilitate the sharing of information within gene sets such as those used in enrichment analysis \citep*{SubramanianEtAl05}. To investigate this scenario, we simulate an additional microarray dataset with expression characteristics similar to the simulation carried out in \cite*{EfronTib07}. We again consider a collection of 1,000 genes, with genes 11-20, 111-130, 211-230, 311-330, 411-430 defining five different gene sets. The set consisting of genes 111-130 is simulated as differentially expressed by drawing them independently from $N(2.5, 1)$; similarly, the genes 411-430 are drawn independently from $N(-1.5, 1)$. The remaining genes, including those in the remaining gene sets, are considered null cases and all drawn from $N(0,1)$. To distinguish dependence determined by pathway membership from dependence determined by physical adjacency, the genes are labeled and randomly permuted so that genes sharing common pathways are not adjacent in the physical sense. Note that many genes are members of no pathway at all and thus are isolated.

Suppose a researcher were to mistakenly assume the dependence structure for these data followed the same pattern as the previous example in which every gene is correlated with its physical neighbors, meaning the usual proper IAR with adjacency matrix $\mathbf{W}_1$ can be used with $d = 0$ in (\ref{eqn:propModel}) as before. We compare the performance of this CAR structure to our proposed approach with neighborhoods determined by pathway membership. In the latter case, the adjacency matrix $\mathbf{W}$ is determined by defining genes that are in the same set to be neighbors. To include all observations without reducing the rank of the precision matrix, we set $d = 1$ in (\ref{eqn:propModel}) so that the marginal distributions of isolated $\mu_j$ become $N(0, \tau^2)$. We implement both models using MCMC with the same burn-in and sampling settings as the previous simulation.

Supplementary Figure \ref{fig:PathVsAdjROC} displays the empirical ROC curves from our model under both neighborhood assumptions, in which we can see that making incorrect assumptions about the neighborhood structure severely inhibits the model's discriminatory power. In fact, thresholding the posterior inclusion probabilities at 0.95 as before results in no cases being identified at all under the physical adjacency assumption. The misspecification of the correlation results in the model overestimating the noise variability in the data, as is clear upon examination of the histogram in Supplementary Figure \ref{fig:PathVsAdjHist}. Superimposed on the histogram are two mean-zero normal densities with variances equal to the posterior means of $\sigma^2$ under both models. The overestimation of the spread of the null distribution results in poor identification of the non-null cases, indicated with the dark tick marks along the $x$-axis. Incorporating a more appropriate neighborhood structure results in improved estimation of the variance components and thus superior discrimination among cases.

Even with knowledge of an approximately correct dependence structure among pathways, a temptation might be to use a standard CAR model by discarding the isolated cases. While this is obviously better than using an entirely incorrect neighborhood assumption, the isolated points still provide information about the parameters common to both the null and non-null components of the data distribution and hence potentially useful information can be needlessly discarded. Consider the smoothed marginal posterior densities of $p$, $\sigma^2$, and $\tau^2$ resulting from both approaches, displayed in Figure \ref{fig:simPathFVsRDens}. Eliminating the isolated cases reduces $J$ in (\ref{eqn:propModel}), so we obtain considerably less posterior concentration about the error variance, which in turn affects the amount of information available to estimate the second variance component, $\tau^2$. As is the case in any hypothesis testing scenario, the operating characteristics are directly affected by the amount of information we have about the error variability. This results in the ``borderline" cases being misclassified as noise at a 0.95 inclusion probability threshold, thus increasing the false non-discovery proportion. (See Supplementary Figure \ref{fig:simPathFVsRProbs}.) The false discovery, false non-discovery, and misclassification proportions at the 0.95 threshold with and without isolated cases are given in Table \ref{tab:FullVsRedErrors}.
\begin{figure}[htb]  
    \centering
    \caption{Smoothed posterior estimates of $p$, $\sigma^2$, and $\tau^2$ with and without the isolated cases.}
        \includegraphics[scale= 0.4, trim= 0in 18pt 0in 0in]{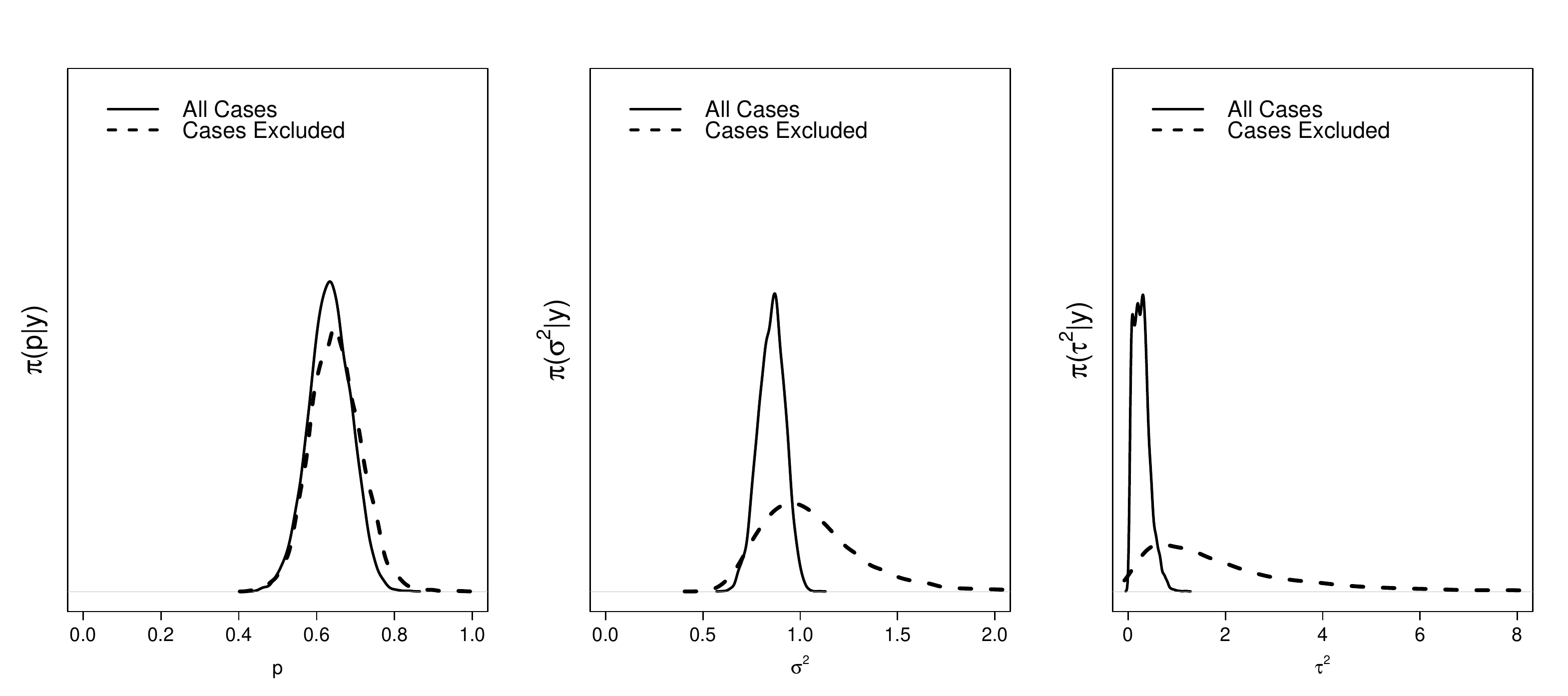}
        \label{fig:simPathFVsRDens}
\end{figure}
\begin{table}[h!]
    \centering
    \caption{Error rates for the simulated pathway example using pathway-based neighborhoods.}
    \begin{tabular}{c c c}
        & With Isolated Cases & Without Isolated Cases \\
        \hline
        FNP \hfill \vline & 0.0083 & 0.1781\\
        FDP \hfill \vline & 0.0000 & 0.0000 \\
        MCP \hfill \vline & 0.0080 & 0.1300 \\
        \hline
    \end{tabular}\label{tab:FullVsRedErrors}
\end{table}

In addition to detection, there is often an interest in estimating the true signal strengths of the non-null cases. Figure \ref{fig:muEsts} plots the smoothed approximate posterior densities and approximate 95\% credible intervals about the signals, $\mu$, for two typical non-null cases in the simulated gene pathway data. Again, the posteriors are evaluated with and without the isolated cases using neighborhoods defined by pathway membership. The true signal strengths (i.e., the means of the distributions of the test statistics) for these two cases are $E(Z) \approx -2.78$ (top panel) and $E(Z) \approx 2.12$ (bottom panel), indicated in the plots by vertical lines. The Figure illustrates the reduction in posterior uncertainty that can be obtained by including all of the data points. For the top panel, the approximate 95\% credible intervals with and without the isolated cases are $[-3.36, -2.40]$ and $[-3.77, -2.12]$, respectively. For the bottom panel, the intervals are $[1.70, 2.66]$ and $[1.24, 3.07]$ with and without the isolated cases, respectively. Table \ref{tab:CIWidths} displays the average widths of the approximate 95\% credible intervals over all of the cases in both non-null pathways with and without the isolated observations. By including the isolated points with an appropriate neighborhood structure afforded by our proposed model, we attain an approximate four-fold increase in the precisions of the signal estimates.
\begin{figure}[htb]  
    \centering
    \caption{Smoothed approximate posterior densities of the signal strengths $\mu_j$ for two non-null cases in the simulated gene pathway example. The bars at the top are the approximate 95\% posterior credible intervals. The vertical lines indicate the true means of the non-null distributions of the test statistics.}
        \includegraphics[scale= 0.6, trim= 0in 18pt 0in 0in]{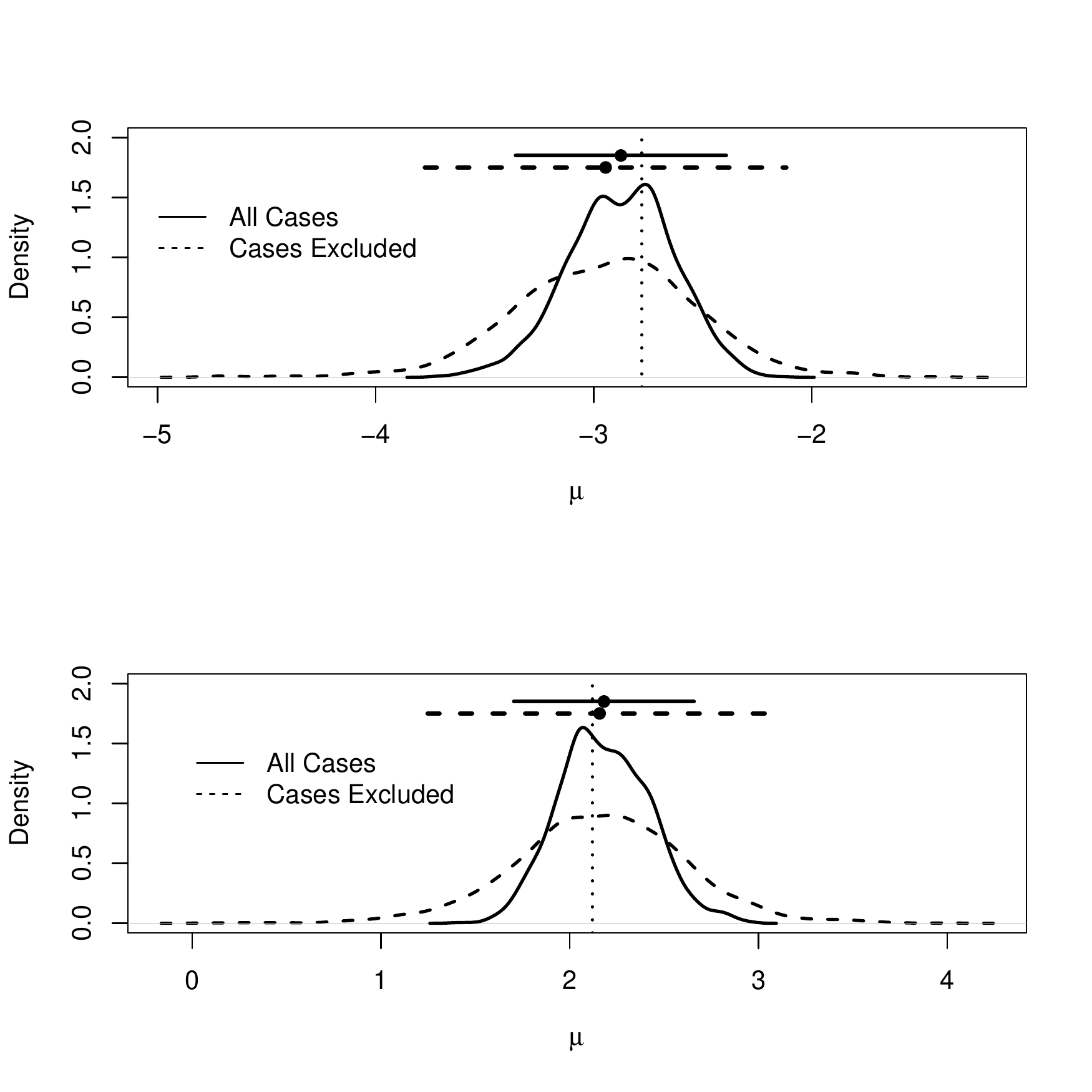}
        \label{fig:muEsts}
\end{figure}
\begin{table}[htb]
    \centering
    \caption{Average widths of the approximate 95\% credible intervals for the non-null pathways in the simulated gene pathway example.}
        \begin{tabular}{c | c c}
            Pathway & All Cases & Cases Excluded \\
            \hline
            2 & 0.9603 (0.0137) & 1.708 (0.0604) \\
            5 & 0.9338 (0.0105) & 1.700 (0.0433) \\
        \end{tabular}\label{tab:CIWidths}
\end{table}

In practice, the most appropriate neighborhood structure to use in the CAR model may not be known. For gene microarray, though, only biologically meaningful dependence structures would usually be considered so that one would not be faced with completely unguided choices. Even among interpretable neighborhoods, we still might be forced to compare competing neighborhood assumptions in an effort to determine which is best. Consider again the simulated pathway data, only we do not know whether the dependence is among biologically-determined pathways, or if it is a function of physical adjacency. In this case, we can gauge the spatial dependence by considering Moran's $I$ statistic under different neighborhood assumptions, whence the competing models can be examined by looking at the strength of estimated spatial association according to each. For the simulated pathway data, we have $I(\mathbf{y}) = 0.0041$ for the (incorrect) adjacency assumption, and $I(\mathbf{y}) = 2.0747$ for the (correct) pathway assumption. The lack of association under the adjacency structure is indicative of the invalid assumption, since we would suspect there to be some kind of association (otherwise there's no need for a CAR structure at all). Further, we can investigate competing models' predictive capabilities through the use of root mean square predictive error (RMSPE) and the Wantanabe-Akaike information criterion \citep*[WAIC;][]{Wantanabe10}, the latter of which is asymptotically equivalent to Bayesian leave-one-out cross validation but averages over the posterior distribution instead of relying on point estimates, unlike AIC or DIC \citep*{GelmanEtAl14}. Table \ref{tab:PathWAIC} displays RMSPE and WAIC for both models, where we see that the correct neighborhood structure is favored.
\begin{table}[htb]
    \centering
    \caption{Root mean square predictive error (RMSPE) and Wantanabe-Akaike information criterion (WAIC) for both neighborhood models in the simulated pathway example.}
    \begin{tabular}{c c c}
        & Pathways & Adjacency \\
        \hline
        RMSPE \hfill \vline & 1.498 & 1.578 \\
        WAIC \hfill \vline & 2794.614 & 3016.502
    \end{tabular}\label{tab:PathWAIC}
\end{table}

The true correlation structure among gene expression data is complex. The implementation of our proposed model (and similar CAR models), however, requires neighborhood structures to be specified {\em a priori}, and this specification can potentially be incomplete or simplistic. Therefore, to study the performance of our proposed model under partially incorrect neighborhood assumptions, we simulate 1,000 expression levels over ten subjects (five treatment, five control) with genes 11-20, 111-130, 211-230, 311-330, 411-430 defining five different gene sets as before. We again suppose that the second and fifth sets are enriched in the treatment group. In contrast to the previous simulation, assume for each treatment subject that the correlation among signals is induced according to a directed graph, as might occur in metabolic pathways in which a signal originates from a single gene and cascades via several networks to other genes downstream \citep*{StingoEtAl11}. A ``parent" gene is selected at random from each of the two active gene sets, and the signals for these parent genes are simulated as $\mu_p \sim N(2.5, 0.75^2)$. Within each pathway, seven additional child nodes are selected at random and their signals are taken to be $\mu_{c_1}, \ldots, \mu_{c_7} \stackrel{\text{iid}}{\sim} N(0.92\mu_p, 0.75^2)$. The remaining signals are drawn independently from $N(0.92\sum_{s=1}^7\mu_{c_s}, 0.75^2)$. Within each gene set, the observed expression levels are taken to be $X_{ik} \stackrel{\text{indep.}}{\sim} N(\mu_k, 1), ~i= 1,\ldots,10$, where $k$ indexes over all genes, including the parent nodes and child nodes, and $i$ indexes the subjects. To represent incompleteness, suppose the neighborhood structure determining $\mathbf{W}$ omits two genes that are actually members of one of the active sets, and likewise for one of the inactive sets. Lastly, we randomly select 30 isolated genes and draw their expression levels from $N_{30}(\mathbf{0}, \boldsymbol{\Sigma})$ for all subjects, treatment and control, where $\boldsymbol{\Sigma} = \{0.9^{|i-j|}\}_{i,j=1}^{30}$. Our proposed model thus incorrectly assumes uniform correlation within each pathway, uses incomplete pathway definitions, and ignores correlation among a subset of genes with no known pathway membership.

Table \ref{tab:cplxArrayErrors} displays the empirical false nondiscovery proportions, false discovery proportions, and overall misclassification proportions for these simulated data under the Bayesian independence testing model and our proposed pathway-determined CAR model with 0.95 posterior probability threshold, as well the SAM procedure with three common thresholds. Despite the model misspecification, our model performs comparably to the Bayesian approach assuming (incorrectly) independence and the generally applicable SAM procedure. In fact, our approach still yields the smallest empirical misclassification proportion, though they are all very close. Even though we have partially incorrect assumptions about the correlation structure, our approach is still superior to that of assuming complete independence in terms of predictive capability, as evident in the lower RMSPE and WAIC values, also displayed in Table \ref{tab:cplxArrayErrors}. Further, the assumed neighborhood structure in our model predicts non-zero spatial correlations that are fairly consistent with those observed in the actual data under the same structure. This is evident in Supplementary Figure \ref{fig:simCplxPPDMoranI}, which plots a histogram of realized $I(\mathbf{y})$ values from the posterior predictive distribution along with the value observed in the actual data. Under the assumed neighborhood structure, $P(I(\mathbf{y}^{\ast}) \geq I(\mathbf{y}) \mid \mathbf{y}) \approx 0.1175$.
\begin{table}[htb]
    \centering
    \caption{Error rates and predictive capabilities of the independence Bayesian testing model (SB) and the CAR testing model under incorrect correlation assumptions, along with error rates from the SAM procedure with three common thresholds. SAM is not used for prediction, so the RMSPE and WAIC are not applicable.}
    \begin{tabular}{c c c c c c c c}
        & SB & CAR & SAM(0.05) & SAM(0.10) & SAM(0.15)\\
        \hline
        FNP \hfill \vline & 0.0123  & 0.0093 & 0.0093 & 0.0093 & 0.0093\\
        FDP \hfill \vline & 0.0000  & 0.0313 & 0.0606 & 0.0882 & 0.1143\\
        MCP \hfill \vline & 0.0120  & 0.0100 & 0.0110 & 0.0120 & 0.0130\\
        \hline
        RMSPE \hfill \vline & 1.3256 & 0.7590 & $-$ & $-$ & $-$\\
        WAIC \hfill \vline & 3470.8036 & 2778.6636 & $-$ & $-$ & $-$ \\
    \end{tabular}\label{tab:cplxArrayErrors}
\end{table}

These simulation results indicate that overall performance of the mixture prior can be improved by using common information across local neighborhoods, when it is available. This improvement is due to the induced penalty on the inclusion probabilities of statistics surrounded by uninteresting observations, and to improved estimation of the mixing proportion and variance components in the data. By choosing the neighborhood weights appropriately, we demonstrate how our model can accommodate isolated genes that have no neighbors. Our proposed approach is evidently superior to an approach using a conventional CAR model, even when the correct pathway information is used to define the neighborhoods but isolated points are discarded. Incorporating spatial dependence and isolated cases results in much sharper Bayesian learning in the posterior distribution, improving inference and reducing uncertainty. Simple diagnostics such as Moran's $I$ under different assumed neighborhood structures can be helpful when competing neighborhood structures are available, as well as considering predictive capabilities through measures such as WAIC, which approximates out-of-sample predictive error while averaging over the posterior distribution. We illustrate also that even simplistic correlation assumptions in the Bayesian testing approach still perform competitively with alternatives such as the SAM procedure while predicting dependence features that are consistent with the observed data. Judging from model fit criteria, partially incorrect correlation assumptions are better than ignoring them altogether in an independence model.\\

\section{Application}\label{sec:Application}
\subsection{E. Coli Data}
As one application of our proposed model, we consider the microarray expression data from \cite*{XiaoEtAl09}. This dataset contains transcriptional activity on the {\em Escherichia coli} chromosome measured as log ratios of transcript abundances between a control and various chemical, physiological, and genetic perturbations comprising 53 experimental conditions. The observed gene expression levels are the average log ratios across conditions. Here, we have 4 replicate measurements on 4,276 genes. Test statistics are calculated by dividing the mean difference by the standard error plus a small constant to reduce the effect of extreme observations, as is done in the SAM procedure. These statistics are probit transformed to yield equivalent $z$ statistics.

The {\em E. coli} chromosome has been shown to have correlated expression levels according to gene location, but with a circular structure so that the first and last genes on the chromosome are considered neighbors \citep*{XiaoEtAl09}. This structure leads us to use the adjacency matrix obtained by replacing the last elements of the first row and first column of $\mathbf{W}_1$ in Section 3 by 1. We take $d = 0$ since each point has two neighbors.

We simulate the posterior distributions of the independence model and our CAR model using the MCMC algorithms described in Sections 2 and 3. The resulting posterior activation probabilities are thresholded at 0.99 to select genes as being differentially expressed. To evaluate performance, we compare the genes selected as differentially expressed to a list of 41 genes identified in \cite*{Macnab92} as having a known or suspected function in the {\em E. coli} chromosome. This list serves as a reference with which we calculate approximate false discovery and false non-discovery proportions.

To illustrate the effect of the shape parameter in the prior for $p$ in our model, we repeatedly simulate the posterior distribution while varying $\alpha$ over selected values between 1 and 1775. For the independence model, we again take $p \sim \text{Unif}(0,1)$. Table \ref{tab:eColiErrorRates} gives the consequent empirical error rates. For lower values of $\alpha$, the sensitivity results in generally higher error rates compared to that from assuming independence. However, for higher $\alpha$, we are able to attain uniformly better performance, with all three error rates outperforming the independence model. The effect of $\alpha$ on the marginal posterior distributions of $p$ and $\rho$ can also be seen in Supplementary Figure \ref{fig:PostDensAlpha}. Higher $\alpha$ values result in higher estimated values of $p$, as expected, but they also result in sharper posterior inferences about both $p$ and $\rho$. We remark that the false discovery proportions are all quite high here. As this analysis is based on a real dataset, though, there is no way of knowing which of these are true false discoveries. Indeed, the high FDP could be a consequence of \cite*{Macnab92} listing the {\em most} interesting genes, not necessarily {\em all} interesting genes.\\
\begin{table}[h!]
    \centering
    \caption{Error rates for the E. Coli data \citep{XiaoEtAl09} under the independence model and the CAR testing model under selected values of $\alpha$ in $\pi_p(p)$. For the CAR model, $d = 0$. The independence model uses a uniform prior on $p$.}
    \begin{tabular}{l c c c c c}
        & Independence & $\alpha = 1$ & $\alpha = 500$ & $\alpha = 1000$ & $\alpha = 1775$ \\
        \hline
        FNP \hfill \vline & 0.0021 & 0.0005 & 0.0010 & 0.0009 & 0.0012 \\
        FDP \hfill \vline & 0.4074 & 0.7821 & 0.5316 & 0.4219 & 0.3793 \\
        MCP \hfill \vline & 0.0072 & 0.0332 & 0.0108 & 0.0072 & 0.0063 \\
        \hline
    \end{tabular}\label{tab:eColiErrorRates}
\end{table}

\subsection{Male vs. Female Lymphoblastoid Cell Data}
For an additional application under a different neighborhood structure, we consider mRNA expression profile data collected from lymphoblastoid cell lines derived from fifteen males and seventeen females. This dataset was analyzed in \cite*{SubramanianEtAl05} with gene set enrichment analysis (GSEA), who sought to identify gene sets enriched in males (male $>$ female) and enriched in females (female $>$ male). Each cell line contains measurements on 22,283 genes. The existing catalog for this dataset includes 319 cytogenetic gene sets, 24 for each of the 24 human chromosomes, and 295 associated with cytogenetic bands. For our analysis, we again calculate two-sample $t$-statistics and normalize to obtain $z$-statistics.

We consider two variants of our proposed CAR testing approach, both of which use the 319 gene sets to define the neighborhoods, but with one model including the isolated points and the other excluding isolated points so that a conventional CAR model is applicable. With the isolated cases included, we set $d = 1$ in (\ref{eqn:propModel}); $d = 0$ when there are no isolated cases. Otherwise, both models are the same. We take $\alpha = 1$ in the prior on $p$ so that it becomes uniform. The MCMC algorithm, identical under both models, uses a burn in of 25,000 iterations, followed by 10,000 sampling iterations, of which every fifth draw is retained. The posterior inclusion probabilities are calculated using (\ref{eqn:CARModProbs}) and thresholded at 0.99 to identify differentially expressed genes.

The estimated posterior inclusion probabilities under both models are quite similar, though not exactly the same. This can be seen in Supplementary Figure \ref{fig:gseaGeneProbs}, which plots the posterior inclusion probabilities versus the test statistics for both models. The differences result in a couple of disagreements on the selection of differentially expressed genes, listed in Table \ref{tab:GSEACasesESS}.
\begin{table}[htb]
    \centering
    \caption{List of genes selected under the generalized CAR model including isolated cases (gCAR) or a conventional CAR model with isolated cases excluded (CAR). The $\times$ symbol indicates selection under the gCAR with isolated points, the $\bullet$ symbol indicates selection when ignoring isolated points. Also listed are the effective sample sizes (ESS) of $\mu_j^{(1)}, \ldots, \mu_j^{(2000)}$ for each gene from the MCMC output.}
    \begin{tabular}{l | c c | c c c c c c}
        & \multicolumn{2}{c}{ESS} \vline & \multicolumn{6}{c}{Pathway} \\
        Gene & gCAR & CAR & \verb|chrX| & \verb|chrXq13| & \verb|chrXp22| & \verb|chrY| & \verb|chrYp11| & \verb|chrYq11|\\
        \hline
        \verb|201028_s_at|  & 1855.221 & 1164.651  & $\bullet$ & & $\bullet$ & $\bullet$ & $\bullet$ & \\
        \verb|201909_at|  & 2000.000 & 1012.763 & & & & $\times$$\bullet$ & $\times$$\bullet$ &  \\
        \verb|204409_s_at|  & 2000.000 & 557.216 & & & & $\times$$\bullet$ & & $\times$$\bullet$ \\
        \verb|204410_at|  & 2106.911 & 1221.072 & & & & $\times$$\bullet$ & & $\times$$\bullet$  \\
        \verb|205000_at| & 2000.000 & 694.079 & & & & $\times$$\bullet$ & & $\times$$\bullet$  \\
        \verb|205001_s_at| & 2000.000 & 2000.000 & & & & $\times$$\bullet$ & & $\times$$\bullet$ \\
        \verb|206624_at| & 2000.000 & 1561.820 & & & & $\times$$\bullet$ & & $\times$$\bullet$  \\
        \verb|206700_s_at| & 2148.696 & 615.271 & & & & $\times$$\bullet$ & & $\times$$\bullet$ \\
        \verb|214131_at| & 1594.909 & 1873.845 & & & & $\bullet$ & & $\bullet$ \\
        \verb|214218_s_at| & 2000.000 & 85.484 & $\times$$\bullet$ & $\times$$\bullet$ & & & &  \\
        \verb|214983_at| & 1778.600 & 2000.000 & & & & $\bullet$ & &  \\
        \verb|221728_x_at| & 1936.754 & 131.264 & $\times$$\bullet$ & $\times$$\bullet$ & & & &  \\
        \verb|203974_at|  & 2000.000 & & \multicolumn{6}{c}{ $\times$ (Isolated case)}\\
    \end{tabular}\label{tab:GSEACasesESS}
\end{table}

The disagreements between the two approaches can partly be explained by the reliability of the estimates themselves. Table \ref{tab:GSEACasesESS} lists the effective sample sizes (ESS) of the MCMC draws of the signals for each identified gene, $\mu_j^{(1)}, \ldots, \mu_j^{(2000)}$ \citep*{KassEtAl98}. The ESS is an approximation of the number of {\em independent} pieces of information about a parameter produced by an MCMC algorithm, where lower numbers reflect higher autocorrelation in the chain and hence slower convergence. The differences between the two approaches considered here are substantial. With a couple of exceptions, the retained values from the Markov chain under the conventional CAR assumption are more highly correlated than in the gCAR, thus reducing the amount of available information about these parameters from a fixed number of iterations. On the other hand, including the isolated cases in this instance generally results in the Markov chain samples being almost as good as an i.i.d. sample from the target distribution.

Table \ref{tab:GSEACasesESS} also indicates the pathway membership for each case. There are six gene sets in which the identified individual genes appear, and the clustering of genes along the X and Y chromosome is apparent. If one were to use the approach of simply declaring as enriched the pathways including the individual differentially expressed genes, our results would agree closely with those found in the GSEA. In particular, the GSEA identified the Y chromosome (\verb|chrY|) and two Y bands with at least 15 genes (\verb|chrYp11, chrYq11|) as being associated with higher expression levels in males. Two of the genes selected by both models appear in the \verb|chrX| and \verb|chrXq13| gene sets. These genes are associated with the set of X chromosome inactivation genes, which is expected to be enriched in females. Note that our proposed gCAR enabled us to identify an isolated gene (\verb|203974_at|) that does not appear in any of the predefined gene sets. This gene would have been missed entirely if we were to use a conventional CAR structure inside of our Bayesian testing model, since it would have been discarded from the analysis. We notice also a particular gene selected only by the conventional CAR that appears in both the X and Y chromosome. This curious case could be a false positive, though, as the slower MCMC convergence under the conventional CAR model makes the posterior inference less reliable than its gCAR counterpart.

The applications presented here illustrate two different approaches to defining neighborhoods across which information may be shared when searching for non-null cases under our Bayesian testing model. While the results are sensitive to the choice of hyperparameters as well as the threshold, it is apparent that ``good" choices can lead to desirable operating characteristics. Applying our proposed approach to these data yields results consistent with past analyses. These results demonstrate our model's ability to harness shared information between cases without sacrificing the possibility of identifying independent cases, something that would not be possible under the conventional CAR assumption. In this analysis, we found that including the isolated cases substantially reduced the autocorrelation. This results in quicker convergence and much greater sampling efficiency than is obtained from a conventional CAR model. This is an especially important consideration when performing MCMC in a large-scale setting, where the computational burden limits the feasible number of iterations. Our results suggest that there could be a positive effect on the sampling efficiency of an MCMC algorithm by including isolated, independent cases in our generalized CAR structure. This is possibly an interesting topic for future research that we do not pursue here.\\

\section{Discussion}\label{sec:Conclusions}
We present a unified approach to correlated Bayesian testing whereby isolated cases and neighboring cases can be included in the same analysis. This allows for improved estimation of the signal strengths, the possibility of identifying isolated cases, and sharper posterior inferences about parameters of interest. We suggest some simple diagnostics that can aid a researcher in determining the most appropriate neighborhood structure when choosing among plausible models. When very little prior information is available concerning correlation structure, we note that there exist in the literature proposed techniques for discovering structure in high dimensional data such as sparse factor modeling \citep*{West03} and independent components analysis \citep*{Comon94}. We demonstrate the robustness of our approach to model misspecification by applying it to simulated data with a complex correlation structure in which the assumptions are partially incorrect. It performs competitively with well-established procedures.

The results presented here are seen to be quite sensitive to the choice of the shape parameter in the prior for the mixing proportion, $p$, in our proposed model. This is in part because large $\alpha$ values result in both prior and posterior concentration of $p$ about large values. These large values mean that the Gibbs sampler tends to visit sparser models more often, and thus parameters that only appear in non-null cases are updated less frequently. Certain applications necessitate the use of a prior that enforces known sparsity \citep*[e.g.,][]{West03, CarvalhoEtAl08}. While the choice of shape hyperparameter does have a considerable effect on subsequent inferences, we demonstrate how finding a good value leads to desirable operating characteristics. In working with our model, we found that an acceptable value of the shape parameter seems to depend upon the strength of the correlation across neighboring observations. The best way to choose this value or otherwise tune the prior to approximately match the true {\em a priori} non-null probability in the data is still an open problem worthy of further investigation.

A related point is the thresholding of the location-specific {\em a posteriori} inclusion probabilities. We use throughout this paper an informal 0.95 decision rule for selecting non-null cases. Decision rules in the Bayesian testing paradigm have been proposed through average risk optimization and consideration of the so-called Bayes false discovery rate (bFDR) \citep*[e.g.,][]{EfronTib02, TadesseEtAl05, BogdanEtAl08}. However, most results concerning the relationship between bFDR and frequentist error measures are based on assumed independence in the data, which we are not considering. Performance also is determined in part by specification of the prior probabilities of the hypotheses. However, the approach of \cite{ScottBerger06} on which our model is based enjoys the virtue of inducing an automatic multiplicity adjustment, even in the correlated case \citep*{BrownEtAl14}. While expression (\ref{eqn:CARModProbs}) allows our calculations to viewed as a fully Bayesian treatment of local false discovery rates \citep*[][Ch. 5]{Efron10}, much work remains to be done on establishing optimal decision rules for controlling different error rates under dependence.

This paper provides a glimpse at the possibility of facilitating more reliable inference by capturing (or at least approximating) the true dependence structures that are inherent in modern high-dimensional data. While this issue has garnered more interest in the recent literature \citep*[e.g.,][]{SmithFahr07, LiZhang10, StingoEtAl11, LeeEtAl14, ZhangEtAl14, ZhaoEtAl14} relatively limited work has been done on modeling nontrivial dependence in Bayesian models for signal detection, particularly in the high-dimensional setting where classical multiple testing approaches are no longer appropriate. Here we propose using Markov-type dependence structures in the continuous component of the spike-and-slab mixture with a Gaussian CAR model. An avenue of future work could be the exploration of other dependence structures. Work has been done on modeling complex measures of distance and covariance structures \cite*[e.g.,][]{DrydenKoloZhou09}, but there is a need for much further research toward building a flexible class of multiple testing models capable of dealing with a wide variety of dependence types.

\begin{appendices}
\section*{Appendix: Proof of Posterior Propriety of the Proposed Model}
    Let $\boldsymbol{\Theta} = (\boldsymbol{\mu}^T, \sigma^2, \tau^2, \rho)^T$ denote the parameter vector and let $F(\mathbf{y},\boldsymbol{\Theta},\boldsymbol{\gamma},p)$ be the joint distribution of the data and the parameters. Then for Model (\ref{eqn:propModel}), it suffices to show that
    \begin{equation}\label{eqn:jointInt}
    \int_{(\boldsymbol{\Theta},\boldsymbol{\gamma},p)}dF(\mathbf{y},\boldsymbol{\Theta},\boldsymbol{\gamma},p) = \sum_{\boldsymbol{\gamma} \in \{0,1\}^J} \int_p \int_{\boldsymbol{\Theta}} f(\mathbf{y},\boldsymbol{\Theta},\boldsymbol{\gamma},p)d\boldsymbol{\Theta} dp < \infty, ~\forall \mathbf{y} ~(a.e.).
    \end{equation}
    First, note that for any $\boldsymbol{\gamma} \in \{0,1\}^J$,
    \begin{eqnarray*}
    f(\mathbf{y},\boldsymbol{\Theta},\boldsymbol{\gamma},p) &=& (2\pi\sigma^2)^{-J/2}\exp\left(-\frac{1}{2\sigma^2}\sum_{j=1}^J(y_j-\gamma_j\mu_j)^2\right)\\
     && \times(2\pi)^{-J/2}|\tau^2 (\mathbf{D}^{\ast}_w-\rho\mathbf{W})^{-1}|^{-1/2}\exp\left(-\frac{1}{2\tau^2}\boldsymbol{\mu}^T(\mathbf{D}^{\ast}_w-\rho\mathbf{W})\boldsymbol{\mu}\right)\\
     && \times(\sigma^2+\tau^2)^{-2}\pi_{\rho}(\rho)\alpha p^{J-\sum_j\gamma_j+\alpha-1}(1-p)^{\sum_j\gamma_j}\\
     && \hfill \\
     &\equiv& f(\mathbf{y}, \boldsymbol{\Theta} \mid \boldsymbol{\gamma})\pi_{(\boldsymbol{\gamma},p)}(\boldsymbol{\gamma},p),
    \end{eqnarray*}
    where $\pi_{(\boldsymbol{\gamma},p)}(\boldsymbol{\gamma},p) = \alpha p^{J-\sum_j\gamma_j+\alpha-1}(1-p)^{\sum_j\gamma_j}$. Hence, the integral inside the summation in (\ref{eqn:jointInt}) is
    \begin{eqnarray*}
    \int_p \int_{\boldsymbol{\Theta}} f(\mathbf{y},\boldsymbol{\Theta},\boldsymbol{\gamma},p)d\boldsymbol{\Theta} dp &\propto& \int_{\boldsymbol{\Theta}}f(\mathbf{y},\boldsymbol{\Theta}\mid \boldsymbol{\gamma})d\boldsymbol{\Theta},
    \end{eqnarray*}
    since $\int_0^1 p^{J-\sum_j\gamma_j+\alpha-1}(1-p)^{\sum_j\gamma_j} dp < \infty$. Also, the summation over $\boldsymbol{\gamma}$ has $2^J$ terms, so it suffices to establish that
    \[
    \int_{\rho}\int_{\tau^2}\int_{\sigma^2}\int_{\boldsymbol{\mu}} f(\mathbf{y},\boldsymbol{\mu},\sigma^2,\tau^2,\rho \mid \boldsymbol{\gamma}) d\boldsymbol{\mu}d\sigma^2d\tau^2d\rho < \infty, \hspace{3mm} \forall \boldsymbol{\gamma} \in \{0,1\}^J.
    \]

    We begin with the case when $\gamma_j = 1,$ for all $j \in \{1,\ldots,J\}$ and define $f_{\mathbf{1}}(\mathbf{y} \mid  \boldsymbol{\mu}) := f(\mathbf{y} \mid \boldsymbol{\mu}, \boldsymbol{\gamma} = \mathbf{1})$ to simplify notation. We have that
    \begin{eqnarray*}
    f_{\mathbf{1}}(\mathbf{y} \mid  \boldsymbol{\mu}) &=& \prod_{j=1}^Jf(y_j \mid \mu_j, \gamma_j = 1) = \prod_{j=1}^JN(y_j \mid \mu_j, \sigma^2).
    \end{eqnarray*}
    Now, since $\rho \in (\nu_1^{-1},\nu_J^{-1})$ implies that $(\mathbf{D}^{\ast}_w-\rho\mathbf{W}) > 0$, where the notation $\mathbf{A} > 0$ means the matrix $\mathbf{A}$ is positive definite \citep*{BanerjeeEtAl04}, the prior on $\boldsymbol{\mu} \sim N_J(\mathbf{0}, ~\tau^2(\mathbf{D}^{\ast}_w-\rho\mathbf{W})^{-1})$. We can thus integrate $f_{\mathbf{1}}(\mathbf{y} \mid  \boldsymbol{\mu})\pi_{\boldsymbol{\mu}}(\boldsymbol{\mu} \mid \tau^2, \rho)$ with respect to $\boldsymbol{\mu}$ as the convolution of two normal densities. The marginal density of $\mathbf{y}$ is then that of a $N_J(\mathbf{0}, ~\sigma^2\mathbf{I} + \tau^2(\mathbf{D}^{\ast}_w-\rho\mathbf{W})^{-1})$ distribution. So, we know the integration yields
    \begin{eqnarray*}
    f_{\mathbf{1}}(\mathbf{y}, \sigma^2, \tau^2, \rho) &=& \pi_{_{(\tau^2,\sigma^2)}}(\tau^2, \sigma^2)\pi_{\rho}(\rho)N_J(\mathbf{y} \mid \mathbf{0}, \sigma^2\mathbf{I} + \tau^2(\mathbf{D}_2^{\ast} - \rho\mathbf{W})^{-1}).
    \end{eqnarray*}
    Now, we reparameterize the variance components by defining $\eta := \tau^2/\sigma^2$ and integrate over $\sigma^2$ using the inverse gamma integral to obtain
    \begin{eqnarray*}
    f_{\mathbf{1}}(\mathbf{y}, \eta, \rho) &\propto& \pi_{\rho}(\rho)(1+\eta)^{-2}| \mathbf{I}+\eta(\mathbf{D}^{\ast}_w-\rho\mathbf{W})^{-1}|^{-1/2}\\
        & & \times \int_0^{\infty}(\sigma^2)^{-(J/2)-1}\exp\left(-\frac{1}{2\sigma^2}\mathbf{y}^T(\mathbf{I}+\eta(\mathbf{D}^{\ast}_w-\rho\mathbf{W})^{-1})^{-1}\mathbf{y}\right)d\sigma^2\\
        && \hfill \\
        &\propto& \pi_{\rho}(\rho)(1+\eta)^{-2}\frac{| \mathbf{I}+\eta(\mathbf{D}^{\ast}_w-\rho\mathbf{W})^{-1}|^{-1/2}}{\left(\mathbf{y}^T(\mathbf{I}+\eta(\mathbf{D}^{\ast}_w-\rho\mathbf{W})^{-1})^{-1}\mathbf{y}\right)^{J/2}}.
    \end{eqnarray*}
    We can rewrite the quadratic form in the denominator of the last expression as
    \begin{eqnarray*}
    \mathbf{y}^T(\mathbf{I}+\eta(\mathbf{D}^{\ast}_w-\rho\mathbf{W})^{-1})^{-1}\mathbf{y} &=& \mathbf{y}^T(\mathbf{D}^{\ast}_w)^{1/2}(\mathbf{D}^{\ast}_w + \eta(\mathbf{I}-\rho\mathbf{W}^{\ast})^{-1})^{-1}(\mathbf{D}^{\ast}_w)^{1/2}\mathbf{y} \\
     &=& \mathbf{x}^T(\mathbf{D}^{\ast}_w + \eta(\mathbf{I}-\rho\mathbf{W}^{\ast})^{-1})^{-1}\mathbf{x},
    \end{eqnarray*}
    where $\mathbf{W}^{\ast} = (\mathbf{D}^{\ast}_w)^{-1/2}\mathbf{W}(\mathbf{D}^{\ast}_w)^{-1/2}$ and $\mathbf{x} = (\mathbf{D}^{\ast}_w)^{1/2}\mathbf{y}$. Let $w_{_{(J)}} = \stackrel{\max}{_{_{1\leq j\leq J}}}w_j.$
    Then, after substantial simplification (see Supplementary Material) it can be shown that, for all $\rho \in (\nu_1^{-1},\nu_J^{-1})$,
    \begin{eqnarray}
    \mathbf{x}^T((w_{_{(J)}} + d)\mathbf{I} + \eta(\mathbf{I}-\rho\mathbf{W}^{\ast})^{-1})^{-1}\mathbf{x} &\geq& k(w_{_{(J)}}+ d +\eta)^{-1}\\
    \Rightarrow (\mathbf{x}^T((w_{_{(J)}} + d)\mathbf{I} + \eta(\mathbf{I}-\rho\mathbf{W}^{\ast})^{-1})^{-1}\mathbf{x})^{-J/2} &\leq& k^{\prime}(w_{_{(J)}}+ d +\eta)^{J/2}
    \end{eqnarray}\label{eqn:prfIneq1}
    where $0< k^{\prime} < \infty$ is constant.

    Similarly, after substantial simplification (see Supplementary Material), we can show that
    \begin{equation}
    |\mathbf{I} + \eta(\mathbf{D}^{\ast}_w -\rho\mathbf{W}^{\ast})^{-1}|^{-1/2} \leq \frac{k^{\prime}\prod_{j=1}^J\max\{(1-\rho\nu_j)^{1/2}, 1\}}{(w_{_{(1)}} + d + \eta)^{J/2}}.
    \end{equation}\label{eqn:prfIneq2}

    We have now established that the density function satisfies
    \begin{eqnarray*}
    f(\mathbf{y}, \eta, \rho) &\leq& C(w_{_{(J)}} + d + \eta)^{J/2}\left(\frac{\prod_{j=1}^J\max\{(1-\rho\nu_j)^{1/2}, 1\}}{(w_{_{(1)}} + d + \eta)^{J/2}}\right)(1+\eta)^{-2}\pi_{\rho}(\rho)\\
        &=& \begin{cases}
                C\left(\frac{w_{_{(J)}} + d +\eta}{w_{_{(1)}}+ d +\eta}\right)^{J/2}\left(\frac{\prod_{j=r_1+1}^J(1-\rho\nu_j)^{1/2}}{(1+\eta)^2}\right)\pi_{\rho}(\rho), & \rho < 0\\
                C\left(\frac{w_{_{(J)}}+ d +\eta}{w_{_{(1)}}+ d +\eta}\right)^{J/2}\left(\frac{\prod_{j=1}^{J-r_2}(1-\rho\nu_j)^{1/2}}{(1+\eta)^2}\right)\pi_{\rho}(\rho), & \rho > 0
            \end{cases}\\
    \end{eqnarray*}
    where $C$ is a finite positive constant.

    Now, $\pi_{\rho}(\rho) \propto I(\nu_1^{-1} < \rho < \nu_J^{-1})$,
    \[
    \int_{\nu_1^{-1}}^0 \prod_{j=r_1+1}^J(1-\rho\nu_j)^{1/2}d\rho < \int_{\nu_1^{-1}}^0 (1-\rho\nu_J)^{\frac{J-r_1}{2}}d\rho < \infty,
    \]
    and
    \[
    \int_0^{\nu_J^{-1}}\prod_{j=1}^{J-r_2}(1-\rho\nu_j)^{1/2}d\rho < \int_0^{\nu_J^{-1}}(1-\rho\nu_1)^{\frac{J-r_2}{2}}d\rho < \infty.
    \]
    Also,
    \[
    \left(\frac{w_{_{(J)}}+ d +\eta}{w_{_{(1)}}+ d +\eta}\right)^{J/2}(1+\eta)^{-2} \leq k^{\prime}(1+\eta)^{-2},
    \]
    for all $\eta$,
    \[
    \Rightarrow \int_0^{\infty}\left(\frac{w_{_{(J)}}+d+\eta}{w_{_{(1)}}+d+\eta}\right)^{J/2}(1+\eta)^{-2} d\eta < \infty.
    \]
    From this, it follows that the integral $\int_{\nu_1^{-1}}^{\nu_J^{-1}}\int_0^{\infty}f_{\mathbf{1}}(\mathbf{y},\eta,\rho)d\eta d\rho$ is finite, showing that $\int_{\boldsymbol{\Theta}}f(\mathbf{y},\boldsymbol{\Theta}\mid \boldsymbol{\gamma})d\boldsymbol{\Theta} < \infty$ when $\gamma_1 = \gamma_2 = \cdots = \gamma_J = 1$.

    We now turn our attention to the case $|\{\gamma_j: \gamma_j = 1\}| = J_1 < J$, where $|\cdot|$ denotes the cardinality of a set. Let $\boldsymbol{\gamma}^{\ast}$ denote such an arbitrary configuration of $\boldsymbol{\gamma}$. Further, let $S_1 = \{j: \gamma_j=1\} \subsetneq \{1,\ldots,J\}$, $\mathbf{y}_1 = \{y_j: j\in S_1\}$, and $\mathbf{y}_0 = \{y_j: j\in S_1^c\}$ so that we may partition $\mathbf{y}$ as $\mathbf{y} = (\mathbf{y}_0^T, \mathbf{y}_1^T)^T$. Our strategy here is to show $h(\mathbf{y}_0, \mathbf{y}_1) \equiv h(\mathbf{y}) := \int_{\boldsymbol{\Theta}}f(\mathbf{y}, \boldsymbol{\Theta} \mid \boldsymbol{\gamma}^{\ast})d\boldsymbol{\Theta} < \infty$ by showing that $\int_{\mathbf{y}_1}h(\mathbf{y}_0, \mathbf{y}_1)d\mathbf{y}_1 < \infty$ for all $S_1^c$ and $\mathbf{y}_0 ~(a.e.)$. The key elements in the argument are that $J_1 < J \Rightarrow J-J_1 > 0$ and that we can factor the double integral as
    \[
    \int_{\mathbb{R}^J}\int_{\mathbb{R}^J}f(\mathbf{y}\mid\boldsymbol{\mu})\pi(\boldsymbol{\mu})d\boldsymbol{\mu}d\mathbf{y} = \left(\int_{\mathbb{R}^J}f(\mathbf{y}\mid\boldsymbol{\mu})d\mathbf{y}\right)\left(\int_{\mathbb{R}^J}\pi(\boldsymbol{\mu})d\boldsymbol{\mu}\right).
    \]

    Now, we have
    \begin{eqnarray*}
    \int_{\mathbf{y}_1} h(\mathbf{y}_0, \mathbf{y}_1)d\mathbf{y}_1 &=& \int_{\sigma^2}\int_{\tau^2}\int_{\rho}\int_{\boldsymbol{\mu}}\int_{\mathbf{y}_1}f(\mathbf{y}_0, \mathbf{y}_1, \tau^2, \sigma^2, \rho, \boldsymbol{\mu} \mid \boldsymbol{\gamma}^{\ast})d\mathbf{y}_1 d\boldsymbol{\mu} d\rho d\tau^2 d\sigma^2\\
        &\propto& \int_0^{\infty}(\sigma^2)^{-\frac{J-J_1}{2}}\exp\left(-\frac{1}{2\sigma^2}\sum_{j\in S_1^c}y_j^2\right)\int_0^{\infty}(\sigma^2+\tau^2)^{-2} d\tau^2 d\sigma^2\\
        && \hfill \\
        &=& \int_0^{\infty}\left(\frac{1}{\sigma^2}\right)^{\frac{J-J_1}{2}+1}\exp\left[-\left(\frac{1}{\sigma^2}\right)\left(\frac{\sum_{j\in S_1^c}y_j^2}{2}\right)\right]d\sigma^2 < \infty,
    \end{eqnarray*}
    since the integral is that of the kernel of an inverse gamma density over $(0, \infty)$.

    The proof is completed by considering the case $\gamma_j = 0, \text{ for all } j$. The preceding argument still applies, though, with $S_1 = \emptyset$ and $J_1 = 0$ (so that integration over $\mathbf{y}_1$ is not done). The result is therefore established.

\end{appendices}

\bibliographystyle{asa}
\bibliography{DependentBayesianTestingRef}

\newpage

\begin{center}
    {\huge A Bayesian Generalized CAR Model for Correlated Signal Detection \\ Supplementary Material}\\[18pt]

    {\large D. Andrew Brown \hspace{18pt} Gauri S. Datta \hspace{18pt} Nicole A. Lazar}
\end{center}

\section{Full Conditional Distributions for the Proposed Model}

Model (2.4) leads to the following full conditional distributions needed for a Gibbs sampling algorithm:
\begin{equation}
    \begin{aligned}
        \mu_j \mid \boldsymbol{\mu}_{(-j)}, \boldsymbol{\gamma}, \sigma^2, \eta, \rho, p, \mathbf{y} &\sim N\left(\frac{\gamma_jy_j + \rho\eta^{-1}\sum_{i \neq j}w_{ji}\mu_i}{\gamma_j + \eta^{-1}(w_j. + d)}, ~\sigma^2(\gamma_j + \eta^{-1}(w_j. + d))^{-1}\right), ~~j=1, \ldots, J\\
        \gamma_j \mid \boldsymbol{\mu}, \boldsymbol{\gamma}_{(-j)}, \sigma^2, \eta, \rho, p, \mathbf{y} &\sim \text{Bern}\left(\frac{(1-p)\exp\left(-(y_j - \mu_j)^2/2\sigma^2\right)}{(1-p)\exp\left(-(y_j - \mu_j)^2/2\sigma^2\right) + p\exp\left(-y_j^2/2\sigma^2\right)}\right), ~j= 1, \ldots, J\\
        \sigma^2 \mid \boldsymbol{\mu}, \boldsymbol{\gamma}, \eta, \rho, p, \mathbf{y} &\sim \text{InvGam}\left(J, ~\frac{(\mathbf{y} - \boldsymbol{\Gamma\mu})^T(\mathbf{y} - \boldsymbol{\Gamma\mu}) + \eta^{-1}\boldsymbol{\mu}^T(\mathbf{D}^{\ast}_w - \rho\mathbf{W})\boldsymbol{\mu}}{2}\right)\\
        p \mid \boldsymbol{\mu}, \boldsymbol{\gamma}, \sigma^2, \eta, \rho, \mathbf{y} &\sim \text{Beta}\left(J - \sum_{i=1}^J\gamma_i + \alpha, \sum_{i=1}^J\gamma_i + 1\right)\\
        [\eta \mid \boldsymbol{\mu}, \boldsymbol{\gamma}, \sigma^2, \rho, p, \mathbf{y}] &\propto \eta^{-J/2 - 2}\exp\left(-\frac{1}{\eta}\left(\frac{\boldsymbol{\mu}^T(\mathbf{D}^{\ast}_w - \rho\mathbf{W})\boldsymbol{\mu}}{2\sigma^2}\right)\right)\left(\frac{\eta}{1+\eta}\right)^2, ~\eta > 0\\
        [\rho \mid \boldsymbol{\mu}, \boldsymbol{\gamma}, \sigma^2, \eta, p, \mathbf{y}] &\propto |\mathbf{D}^{\ast}_w - \rho\mathbf{W}|^{1/2}\exp\left(-\frac{\boldsymbol{\mu}^T(\mathbf{D}^{\ast}_w - \rho\mathbf{W})\boldsymbol{\mu}}{2\eta\sigma^2}\right)I(\nu_1^{-1} < \rho < \nu_J^{-1}),\\
    \end{aligned}\label{eqn:fullConditionals}
\end{equation}
where $[~\cdot \mid \cdot~]$ denotes a conditional density.\\

\section{Additional Facts}\label{sec:PropProof}

%
%
In this Section, we provide a proof that the maximum value of $\rho$ is an increasing function of $d$ in Model (4). Also, we detail the simplifications used to obtain (7), (8), and (9)  in the proof. The simplifications make use of a simple inequality, which we state as a Lemma.

\subsection{Relationship Between $\rho$ and $d$}
Consider two values $d_1 < d_2$. Let $\mathbf{x}_i^{\ast} = \argmax_{\mathbf{x}} \mathbf{x}^T(\mathbf{D}_w + d_i\mathbf{I})^{-1/2}\mathbf{W}(\mathbf{D}_w + d_i\mathbf{I})^{-1/2}\mathbf{x} / \mathbf{x}^T\mathbf{x}, ~i= 1, 2$, and let $\lambda_{J,i} > 0$ be the maximum eigenvalue of $(\mathbf{D}_w + d_i\mathbf{I})^{-1/2}\mathbf{W}(\mathbf{D}_w + d_i\mathbf{I})^{-1/2}, ~i= 1,2$. Then it follows from the properties of the Rayleigh quotient that
\begin{eqnarray*}
    \lambda_{J,2} &=& \frac{\mathbf{x}_2^{\ast,T}(\mathbf{D}_w + d_2\mathbf{I})^{-1/2}\mathbf{W}(\mathbf{D}_w + d_2\mathbf{I})^{-1/2}\mathbf{x}_2^{\ast}}{\mathbf{x}_2^{\ast,T}\mathbf{x}_2^{\ast}} \\
        &=& \frac{1}{\mathbf{x}_2^{\ast,T}\mathbf{x}_2^{\ast}}\sum_i\sum_j \frac{w_{ij}x_{i,2}^{\ast}x_{j,2}^{\ast}}{\sqrt{w_i. + d_2}\sqrt{w_j. + d_2}} \\
        &<& \frac{1}{\mathbf{x}_2^{\ast,T}\mathbf{x}_2^{\ast}}\sum_i\sum_j \frac{w_{ij}x_{i,2}^{\ast}x_{j,2}^{\ast}}{\sqrt{w_i. + d_1}\sqrt{w_j. + d_1}} \\
        &\leq& \frac{\mathbf{x}_1^{\ast,T}(\mathbf{D}_w + d_1\mathbf{I})^{-1/2}\mathbf{W}(\mathbf{D}_w + d_1\mathbf{I})^{-1/2}\mathbf{x}_1^{\ast}}{\mathbf{x}_1^{\ast,T}\mathbf{x}_1^{\ast}} \\
        &=& \lambda_{J,1},
\end{eqnarray*}
so $\lambda_{J,1}^{-1} < \lambda_{J,2}^{-1}$. The inequality in the third line follows from the Perron-Frobenius Theorem \citep*[e.g.,][]{Serre02}, which guarantees that the elements of the eigenvector associated to $\lambda_{J,2}$ are all positive. Thus, the maximum possible value for $\rho$ is an increasing function in $d$.

\subsection{Facts Used in the Proof of Posterior Propriety}
\begin{lemma}\label{bLemma}
For any positive constants $a,b, \text{and } c$,
\[
\frac{b}{ba+c} < \frac{\max\{b,1\}}{a+c}.
\]
\end{lemma}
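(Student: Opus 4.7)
The plan is to prove the lemma by splitting on the two branches of $\max\{b,1\}$ that appear on the right-hand side. Since $a,b,c>0$, both denominators $ba+c$ and $a+c$ are strictly positive, so we may cross-multiply freely without worrying about sign reversals.

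For the first case, $b\geq 1$, we have $\max\{b,1\}=b$. After cancelling the common positive factor $b$ from both sides, the inequality reduces to $a+c\leq ba+c$, i.e., $(b-1)a\geq 0$, which is immediate from $a>0$ and $b\geq 1$. For the second case, $0<b<1$, we have $\max\{b,1\}=1$, and cross-multiplying gives $b(a+c)<ba+c$, which simplifies to $bc<c$; this holds because $c>0$ and $b<1$.

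There is essentially no obstacle here — the argument is a routine case analysis. The only wrinkle worth flagging is that at the single boundary value $b=1$ the first case yields equality rather than strict inequality, so the $<$ in the statement should really be read as $\leq$ at that point. This is immaterial for the way the lemma is used in the proof of posterior propriety, where it is invoked only to replace the factor $b/(ba+c)$ inside a larger expression by a finite upper bound of the form $\max\{b,1\}/(a+c)$ before integrating.
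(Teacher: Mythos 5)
Your proof is correct and takes essentially the same route as the paper's: a case split on whether $b$ exceeds $1$, followed by elementary algebra on the cross-multiplied difference. Your remark about $b=1$ is a valid catch — the paper's own proof silently treats only $0<b<1$ and $b>1$, and the strict inequality does degenerate to equality at $b=1$, which is harmless since the lemma is invoked downstream only in the form of a non-strict bound $\frac{1-\rho\nu_j}{(1-\rho\nu_j)(w_{(1)}+d)+\eta} \leq \frac{\max\{1-\rho\nu_j,1\}}{w_{(1)}+d+\eta}$.
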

\begin{proof}
    Let $a,c \in \mathbb{R}^+$. Then, for $0 < b < 1$,
    \begin{eqnarray*}
    \frac{b}{ba+c} - \frac{1}{a+c} &=& \frac{b(a+c) - ba - c}{(ba+c)(a+c)}\\
     &=& \frac{c(b-1)}{(ba+c)(a+c)}\\
     &<& 0,
     \end{eqnarray*}
    and for $b > 1$,
    \begin{eqnarray*}
    \frac{b}{ba+c} - \frac{b}{a+c} &=& \frac{b(a+c) - b(ba + c)}{(ba+c)(a+c)}\\
     &=& \frac{ba(1-b)}{(ba+c)(a+c)}\\
     &<& 0.
    \end{eqnarray*}
\end{proof}

    Now, the matrix $(w_{_{(J)}} + d)\mathbf{I} - \mathbf{D}^{\ast}_w = w_{_{(J)}}\mathbf{I} - \mathbf{D}_w$ is diagonal with nonnegative entries and thus positive semidefinite, which we denote as $(w_{_{(J)}} + d)\mathbf{I} - \mathbf{D}^{\ast}_w \geq 0$.

    Let $\nu_1 \leq \nu_2 \leq \cdots \leq \nu_J$ be the ordered eigenvalues of $\mathbf{W}^{\ast}$. Then the eigenvalues of $\mathbf{I}-\rho\mathbf{W}^{\ast}$ are $1-\rho\nu_j, ~j = 1,\ldots,J$. Hence,
    \begin{eqnarray*}
    \rho \in (\nu_1^{-1},\nu_J^{-1}) &\Rightarrow& 1 - \rho\nu_j > 0, ~\forall j\\
      &\Rightarrow& (\mathbf{I}-\rho\mathbf{W}^{\ast}) > 0 \\
      &\Rightarrow& \eta(\mathbf{I}-\rho\mathbf{W}^{\ast})^{-1} > 0.
    \end{eqnarray*}
    By adding and subtracting $\eta(\mathbf{I}-\rho\mathbf{W}^{\ast})^{-1}$, we obtain
    \[
    (w_{_{(J)}} + d)\mathbf{I} - \mathbf{D}^{\ast}_w = (w_{_{(J)}} + d)\mathbf{I} + \eta(\mathbf{I}-\rho\mathbf{W}^{\ast})^{-1} - (\mathbf{D}^{\ast}_w + \eta(\mathbf{I}-\rho\mathbf{W}^{\ast})^{-1}) \geq 0.
    \]
    Making use of the fact that $\mathbf{B} > 0, \mathbf{A}-\mathbf{B} \geq 0 \Rightarrow \mathbf{B}^{-1} - \mathbf{A}^{-1} \geq 0$, it follows that
    \[
    (\mathbf{D}^{\ast}_w + \eta(\mathbf{I}-\rho\mathbf{W}^{\ast})^{-1})^{-1} - ((w_{_{(J)}} + d)\mathbf{I} + \eta(\mathbf{I}-\rho\mathbf{W}^{\ast})^{-1})^{-1} \geq 0
    \]
    \[
     \Rightarrow \mathbf{x}^T(\mathbf{D}^{\ast}_w + \eta(\mathbf{I}-\rho\mathbf{W}^{\ast})^{-1})^{-1}\mathbf{x} \geq \mathbf{x}^T((w_{_{(J)}} + d)\mathbf{I} + \eta(\mathbf{I}-\rho\mathbf{W}^{\ast})^{-1})^{-1}\mathbf{x}
    \]
    \[
    \Rightarrow (\mathbf{x}^T(\mathbf{D}^{\ast}_w + \eta(\mathbf{I}-\rho\mathbf{W}^{\ast})^{-1})^{-1}\mathbf{x})^{-J/2} \leq (\mathbf{x}^T((w_{_{(J)}} + d)\mathbf{I} + \eta(\mathbf{I}-\rho\mathbf{W}^{\ast})^{-1})^{-1}\mathbf{x})^{-J/2}.
    \]

    Now, $\mathbf{W}^{\ast}$ is symmetric, so it has a spectral decomposition of the form $\mathbf{W}^{\ast} = \mathbf{P}\mathbf{M}\mathbf{P}^T$, where $\mathbf{P}$ is the orthogonal matrix of eigenvectors of $\mathbf{W}^{\ast}$ and $\mathbf{M}$ is the diagonal matrix of eigenvalues. Let $\mathbf{u} = \mathbf{P}^T\mathbf{x} \Rightarrow \mathbf{x} = \mathbf{Pu}$ so that
    \begin{eqnarray*}
    \mathbf{x}^T((w_{_{(J)}} + d)\mathbf{I} + \eta(\mathbf{I}-\rho\mathbf{W}^{\ast})^{-1})^{-1}\mathbf{x} &=& \mathbf{u}^T\mathbf{P}^T((w_{_{(J)}} + d)\mathbf{I} + \eta(\mathbf{I}-\rho\mathbf{W}^{\ast})^{-1})^{-1}\mathbf{Pu} \\
        &=& \mathbf{u}^T(\mathbf{P}^T(w_{_{(J)}} + d)\mathbf{P} + \eta\mathbf{P}^T(\mathbf{I}-\rho\mathbf{W}^{\ast})^{-1}\mathbf{P})^{-1}\mathbf{u} \\
        &=& \mathbf{u}^T((w_{_{(J)}} + d)\mathbf{I} + \eta(\mathbf{I}-\rho\mathbf{M})^{-1})^{-1}\mathbf{u}\\
        &=& \sum_{j=1}^J \frac{(1-\rho\nu_j)u_j^2}{(w_{_{(J)}} + d)(1-\rho\nu_j)+\eta}.
    \end{eqnarray*}

    Since $\mathbf{D}^{\ast}_w$ is diagonal and the diagonal elements of $\mathbf{W}$ are zero, the diagonal elements of $\mathbf{W}^{\ast} = (\mathbf{D}^{\ast}_w)^{-1/2}\mathbf{W}(\mathbf{D}^{\ast}_w)^{-1/2}$ are zero and thus $\text{tr}(\mathbf{W}^{\ast}) = 0 = \sum_{j=1}^J \nu_j$. It must then be true that there are $r_1 > 0$ negative eigenvalues and $r_2 > 0$ positive eigenvalues of $\mathbf{W}^{\ast}$, since $r := r_1 + r_2 = \text{rank}(\mathbf{W}^{\ast}) > 0$. The summation in the last line can then be separated according to the sign of the eigenvalue in each term as
    \begin{equation}\label{eqn:posNeg}
    \underbrace{\sum_{j=1}^{r_1} \frac{(1-\rho\nu_j)u_j^2}{(w_{_{(J)}} + d)(1-\rho\nu_j)+\eta}}_{\nu_j < 0} + \underbrace{\sum_{j=J-r_2+1}^J \frac{(1-\rho\nu_j)u_j^2}{(w_{_{(J)}} + d)(1-\rho\nu_j)+\eta}}_{\nu_j > 0} + \underbrace{\sum_{j=r_1+1}^{J-r_2} \frac{u_j^2}{w_{_{(J)}} + d +\eta}}_{\nu_j = 0}.
    \end{equation}
    If $\nu_1^{-1} < \rho < 0$, then $0 < 1-\rho\nu_j < 1$ for $j=1,\ldots,r_1$ and $1-\rho\nu_j > 1$ for $j= J-r_2+1,\ldots,J,$ so
    \begin{eqnarray*}
    (\ref{eqn:posNeg}) &\geq& \sum_{j=J-r_2+1}^J \frac{(1-\rho\nu_j)u_j^2}{(w_{_{(J)}} + d)(1-\rho\nu_j)+\eta} + \sum_{j=r_1+1}^{J-r_2} \frac{u_j^2}{w_{_{(J)}} + d +\eta}\\
        &\geq& \sum_{j=J-r_2+1}^J \frac{u_j^2}{w_{_{(J)}} + d +\eta} + \sum_{j=r_1+1}^{J-r_2} \frac{u_j^2}{w_{_{(J)}}+ d +\eta}\\
        &=& (w_{_{(J)}} + d +\eta)^{-1}\sum_{j=r_1+1}^J u_j^2,
    \end{eqnarray*}
    where the second line follows from noticing that $(w_{_{(J)}} + d +\eta)^{-1} - (1-\rho\nu_j)((w_{_{(J)}} + d)(1-\rho\nu_j)+\eta)^{-1} < 0$ for $\nu_j > 0$. Similarly, if $0 < \rho < \nu_J^{-1}$, then
    \begin{eqnarray*}
    (\ref{eqn:posNeg}) &\geq& \sum_{j=1}^{r_1} \frac{(1-\rho\nu_j)u_j^2}{(w_{_{(J)}} + d)(1-\rho\nu_j)+\eta} + \sum_{j=r_1+1}^{J-r_2} \frac{u_j^2}{w_{_{(J)}}+ d +\eta}\\
        &\geq& \sum_{j=1}^{r_1} \frac{u_j^2}{w_{_{(J)}}+ d +\eta} + \sum_{j=r_1+1}^{J-r_2} \frac{u_j^2}{w_{_{(J)}} + d +\eta}\\
        &=& (w_{_{(J)}} + d +\eta)^{-1}\sum_{j=1}^{J-r_2}u_j^2.
    \end{eqnarray*}

    Thus, we have that for all $\rho \in (\nu_1^{-1},\nu_J^{-1})$,
    \[
    \mathbf{x}^T((w_{_{(J)}} + d)\mathbf{I} + \eta(\mathbf{I}-\rho\mathbf{W}^{\ast})^{-1})^{-1}\mathbf{x} \geq k(w_{_{(J)}}+ d +\eta)^{-1}
    \]
    \[
    \Rightarrow (\mathbf{x}^T((w_{_{(J)}} + d)\mathbf{I} + \eta(\mathbf{I}-\rho\mathbf{W}^{\ast})^{-1})^{-1}\mathbf{x})^{-J/2} \leq k^{\prime}(w_{_{(J)}}+ d +\eta)^{J/2}
    \]
    where $0< k^{\prime} < \infty$ is constant. This establishes (7) and (8).

    To see that (9) holds, note that
    \begin{eqnarray*}
    | \mathbf{I} + \eta(\mathbf{D}^{\ast}_w-\rho\mathbf{W})^{-1} | &=& | (\mathbf{D}^{\ast}_w)^{-1/2}| | \mathbf{D}^{\ast}_w + \eta(\mathbf{I}-\rho\mathbf{W}^{\ast})^{-1}| | (\mathbf{D}^{\ast}_w)^{-1/2} | \\
        &=& |(\mathbf{D}^{\ast}_w)|^{-1} | \mathbf{D}^{\ast}_w + \eta(\mathbf{I}-\rho\mathbf{W}^{\ast})^{-1}| \\
        &\equiv& k| \mathbf{D}^{\ast}_w + \eta(\mathbf{I}-\rho\mathbf{W}^{\ast})^{-1}|.
    \end{eqnarray*}
    Letting $w_{_{(1)}} = \stackrel{\min}{_{_{1\leq j \leq J}}} w_j.$, and again adding and subtracting $\eta(\mathbf{I}-\rho\mathbf{W}^{\ast})^{-1}$, we obtain
    \[
    \mathbf{D}^{\ast}_w - (w_{_{(1)}} + d)\mathbf{I} \geq 0 \Rightarrow \mathbf{D}^{\ast}_w + \eta(\mathbf{I}-\rho\mathbf{W}^{\ast})^{-1} - ((w_{_{(1)}} + d)\mathbf{I} + \eta(\mathbf{I}-\rho\mathbf{W}^{\ast})^{-1}) \geq 0.
    \]
    But $\mathbf{B} > 0, ~\mathbf{A}-\mathbf{B} \geq 0 \text{ implies } |\mathbf{A}| \geq |\mathbf{B}|$, so we find that
    \[
    |\mathbf{D}^{\ast}_w + \eta(\mathbf{I}-\rho\mathbf{W}^{\ast})^{-1}| \geq |(w_{_{(1)}} + d)\mathbf{I} + \eta(\mathbf{I}-\rho\mathbf{W}^{\ast})^{-1}|
    \]
    \[
    \Rightarrow k| \mathbf{D}^{\ast}_w + \eta(\mathbf{I}-\rho\mathbf{W}^{\ast})^{-1} | \geq K| (w_{_{(1)}} + d)\mathbf{I} + \eta(\mathbf{I}-\rho\mathbf{W}^{\ast})^{-1}|.
    \]
    The eigenvalues of $(\mathbf{I}-\rho\mathbf{W}^{\ast})^{-1}$ are $(1-\rho\nu_j)^{-1}$, $j= 1,\ldots,J$, so it follows that the eigenvalues of $(w_{_{(1)}} + d)\mathbf{I} + \eta(\mathbf{I}-\rho\mathbf{W}^{\ast})^{-1}$ are $w_{_{(1)}} + d + \eta(1-\rho\nu_j)^{-1}$, $j= 1,\ldots,J$. Therefore,
    \begin{eqnarray*}
    k| (w_{_{(1)}} + d)\mathbf{I} + \eta(\mathbf{I}-\rho\mathbf{W}^{\ast})^{-1}| &=& k\prod_{j=1}^J(w_{_{(1)}} + d + \eta(1-\rho\nu_j)^{-1})\\
        &=& \frac{k\prod_{j=1}^J((1-\rho\nu_j)(w_{_{(1)}} + d) + \eta)}{\prod_{j=1}^J(1-\rho\nu_j)},
    \end{eqnarray*}
    and subsequently
    \[
    | \mathbf{I} + \eta(\mathbf{D}^{\ast}_w-\rho\mathbf{W})^{-1} |^{-1/2} \leq k^{\prime} \left(\frac{\prod_{j=1}^J((1-\rho\nu_j)(w_{_{(1)}} + d) + \eta)}{\prod_{j=1}^J(1-\rho\nu_j)}\right)^{-1/2},
    \]
    where $0 < k^{\prime} < \infty$ is constant. But $1-\rho\nu_j > 0, \text{ for all } j$, so by Lemma \ref{bLemma}
    \[
    \frac{1-\rho\nu_j}{(1-\rho\nu_j)(w_{_{(1)}} + d) + \eta} \leq \frac{\max\{1-\rho\nu_j, 1\}}{w_{_{(1)}} + d + \eta}, \hspace{3mm} \forall j \in \{1,\ldots,J\}
    \]
    \[
    \Rightarrow \frac{\prod_{j=1}^J(1-\rho\nu_j)}{\prod_{j=1}^J((1-\rho\nu_j)(w_{_{(1)}} + d) + \eta)} \leq \frac{\prod_{j=1}^J\max\{1-\rho\nu_j, 1\}}{(w_{_{(1)}} + d + \eta)^J}
    \]

\newpage

\section{Supplementary Figures}
\begin{figure}[h!]  
    \begin{center}
    \caption{Comparison of the $|t_2|$ prior on $\tau$ (Gelman, 2006) (after transforming to the $\tau^2$ scale) and the prior on $\tau^2 \mid \sigma^2$ suggested by Scott and Berger (2006), denoted by SB. Here, the scale parameter is set to 1 in both densities.}
        \includegraphics[scale= 0.4]{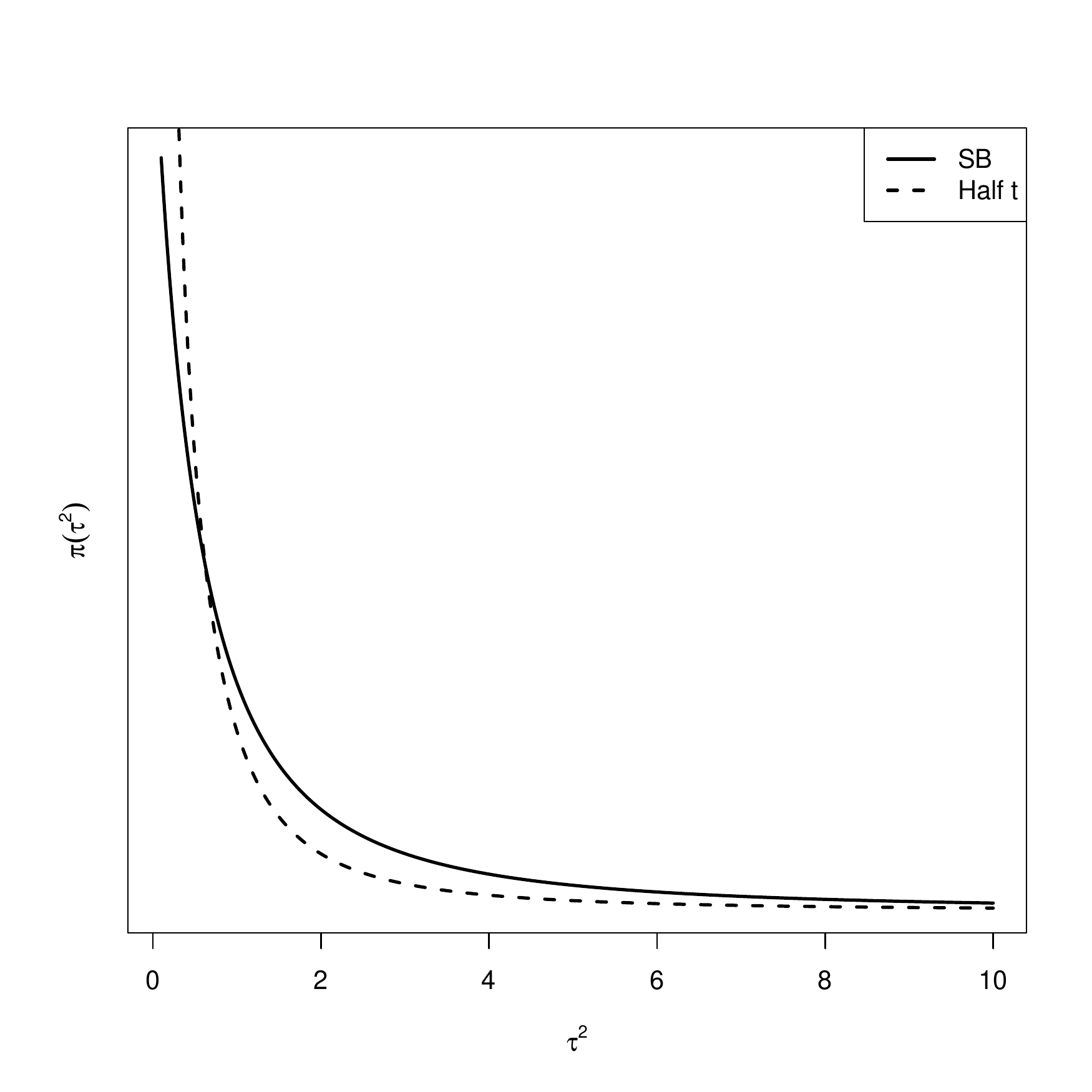}
        \label{fig:SBvHalfT}
    \end{center}
\end{figure}

\begin{figure}[h!tb]  
    \centering
    \caption{Simulated binary activation pattern drawn from an Ising distribution.}
        \includegraphics[scale= 0.4]{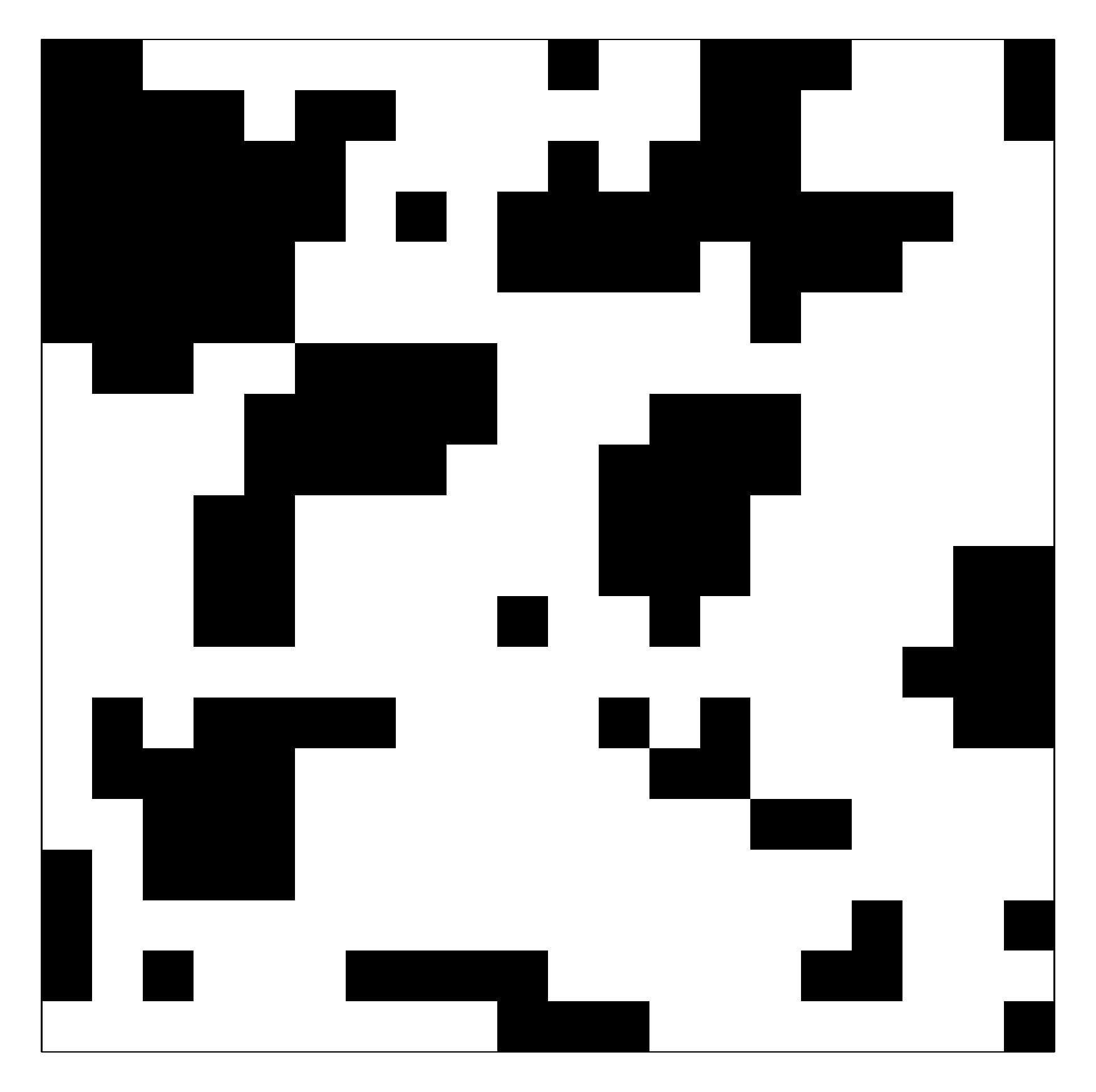}
        \label{fig:ActPattern}
\end{figure}

\begin{figure}[h!tb]  
    \centering
    \caption{Simulated data using the activation pattern in Figure \ref{fig:ActPattern} with non-null mean 3.5.}
        \includegraphics[scale= 0.4]{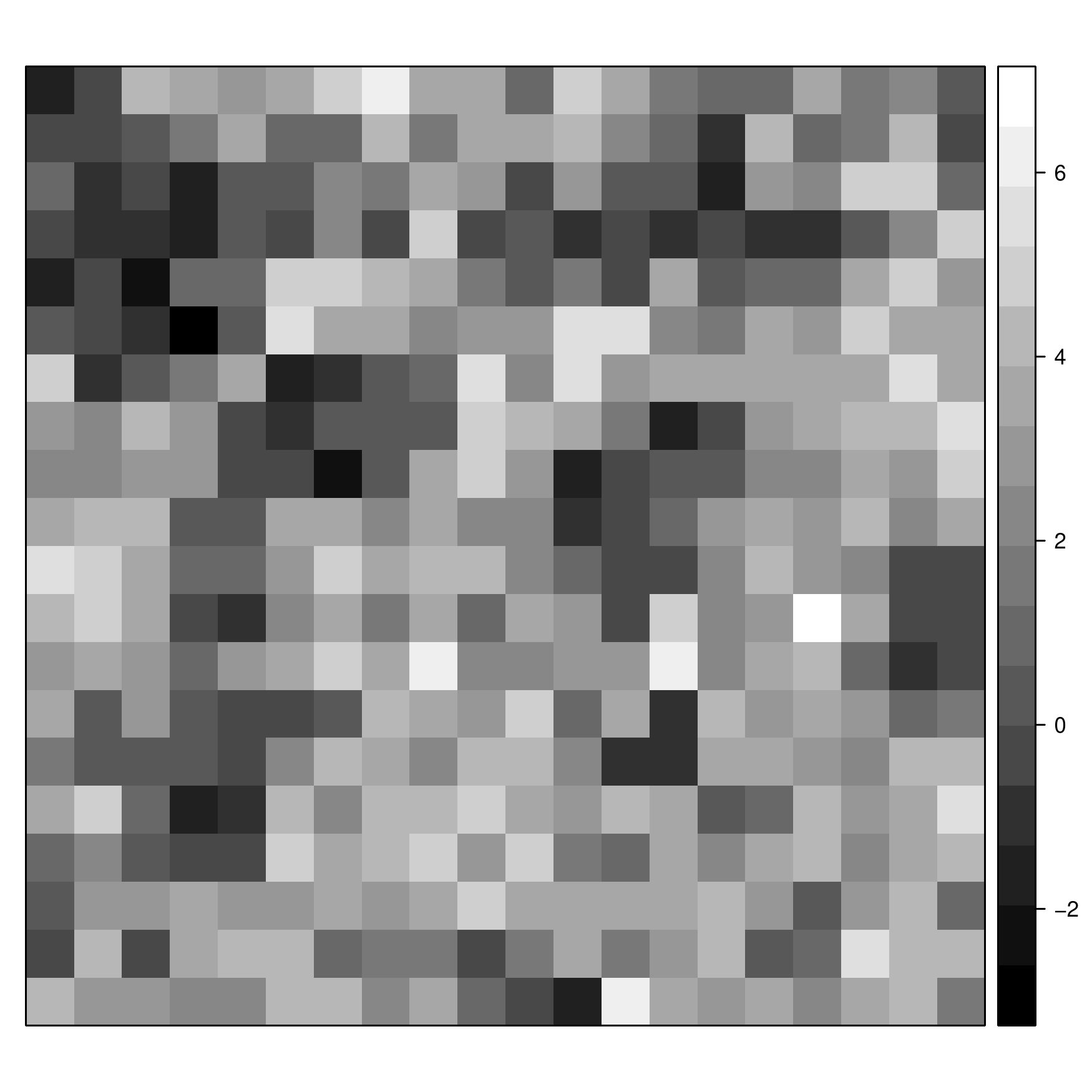}
        \label{fig:SimDatArray}
\end{figure}


\begin{figure}[htb]  
    \centering
    \caption{Neighborhood structures and weights used for the Bayesian CAR model in the simulation study. In each illustration, the center square represents the gene of interest, and the numbers are the weights $w_{ij}$ assigned to each gene in the neighborhood.}
        \includegraphics[scale= 0.6]{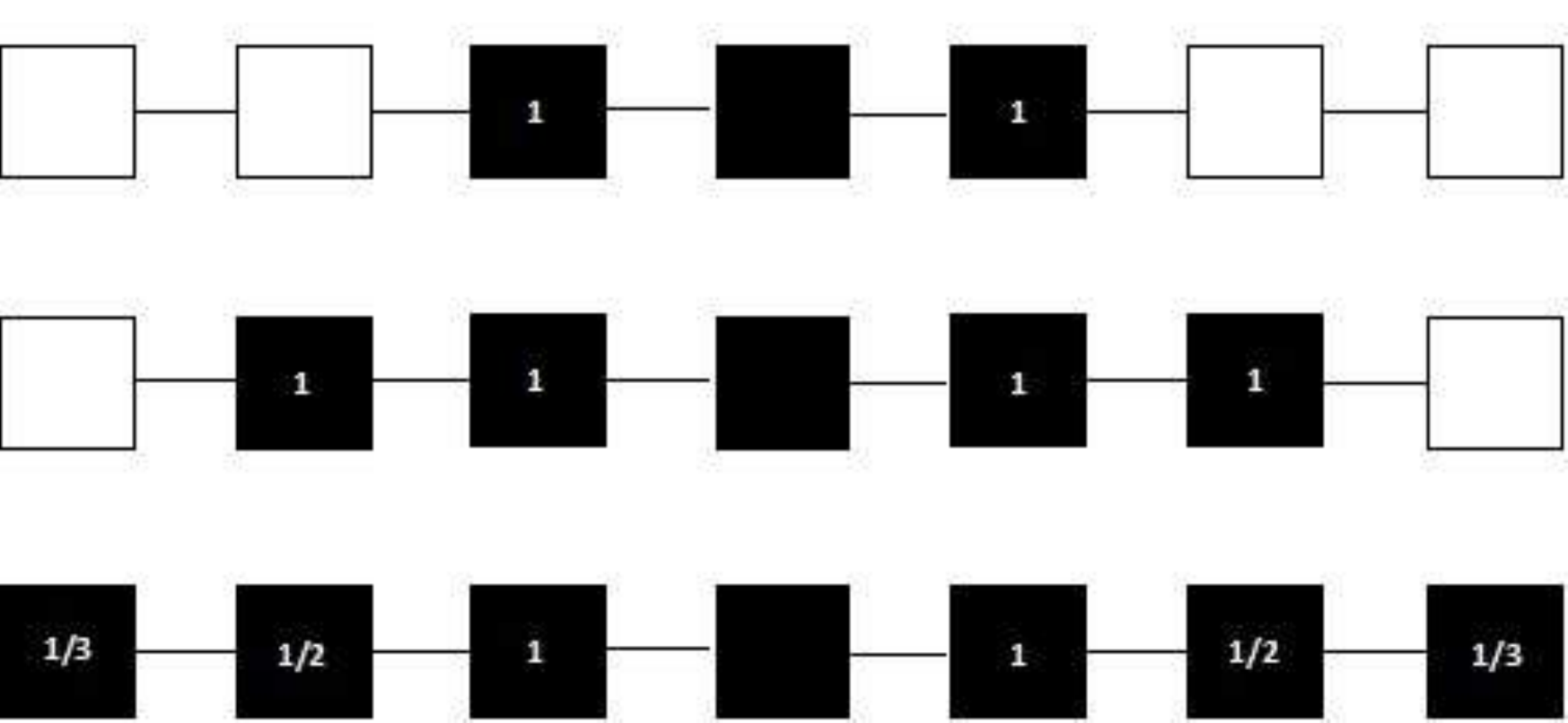}
        \label{fig:nbhds}
\end{figure}


\begin{figure}[h!]  
    \centering
    \caption{Estimated posterior inclusion probabilities $p_i$ versus test statistics $y_i$, $i= 1, \ldots, 1000$ for the simulated microarray data with neighborhoods determined through physical adjacency. The dashed (jagged) curve results from the CAR($\mathbf{W}_1$) model, and the dotted (smooth) line results from the SB model. The circled point corresponds to the indicated statistic depicted in Figure \ref{fig:pointZ}.}
        \includegraphics[scale= 0.4]{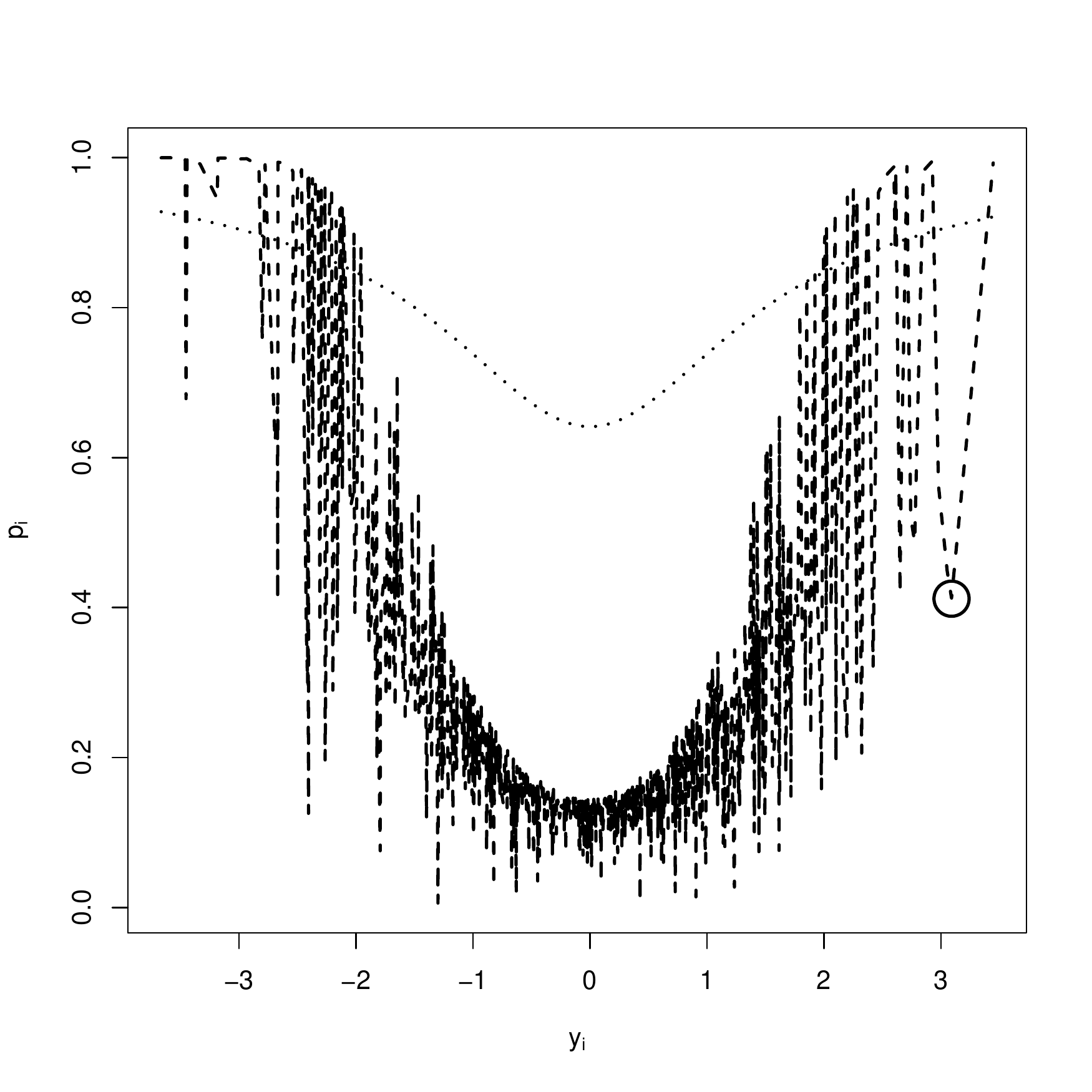}
        \label{fig:PiZi}
\end{figure}

\begin{figure}[h!]
    \begin{center}
    \caption{Graphical depiction of test statistics from the simulated microarray data. The test statistics $y_i$ are arranged in order, $i= 1, 2, \ldots, 1000$, going from the lower left to the upper right, row-wise from left to right. The circle indicates the statistic corresponding to the circled $(y_i, p_i)$ point in Figure \ref{fig:PiZi}.}
        \includegraphics[scale= 0.5, clip= true, trim= -.5inin 0in 0in 1in]{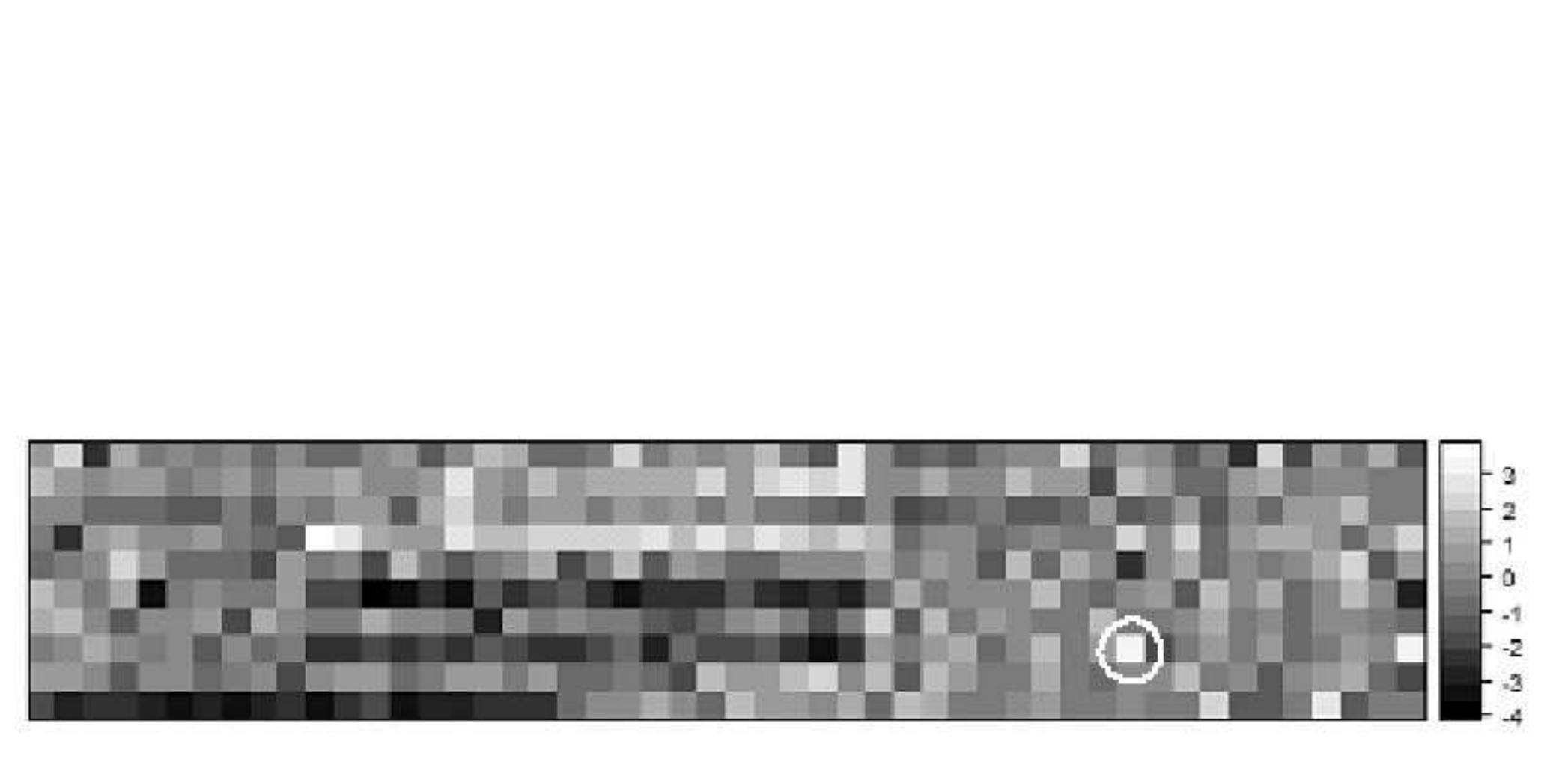}
        \label{fig:pointZ}
    \end{center}
\end{figure}

\begin{figure}[htb]  
    \centering
    \caption{Empirical ROC curves for the testing model using the physical-adjacency CAR model and the generalized CAR using pathways to define neighborhoods with isolated points included.}
        \includegraphics[scale= 0.45]{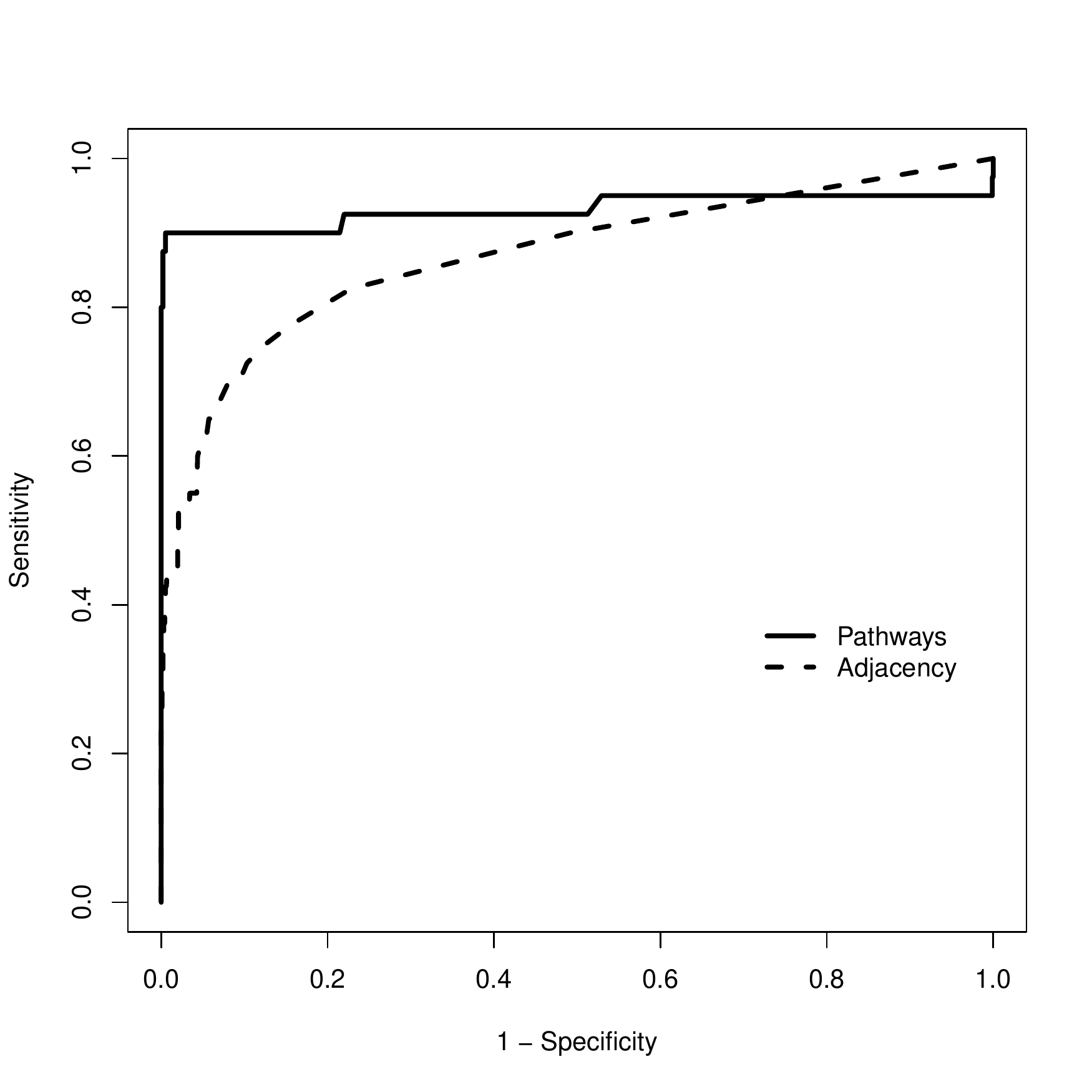}
        \label{fig:PathVsAdjROC}
\end{figure}

\begin{figure}[htb]  
    \centering
    \caption{Histogram of test statistics from the simulated pathways example with normal densities superimposed, each with mean zero and standard deviations estimated as $\sqrt{E(\sigma^2 \mid \mathbf{y})}$ from the posterior distributions of both neighborhood structures. The tick marks at the bottom indicate the non-null cases.}
        \includegraphics[scale= 0.45, trim= 0in 18pt 0in 0in]{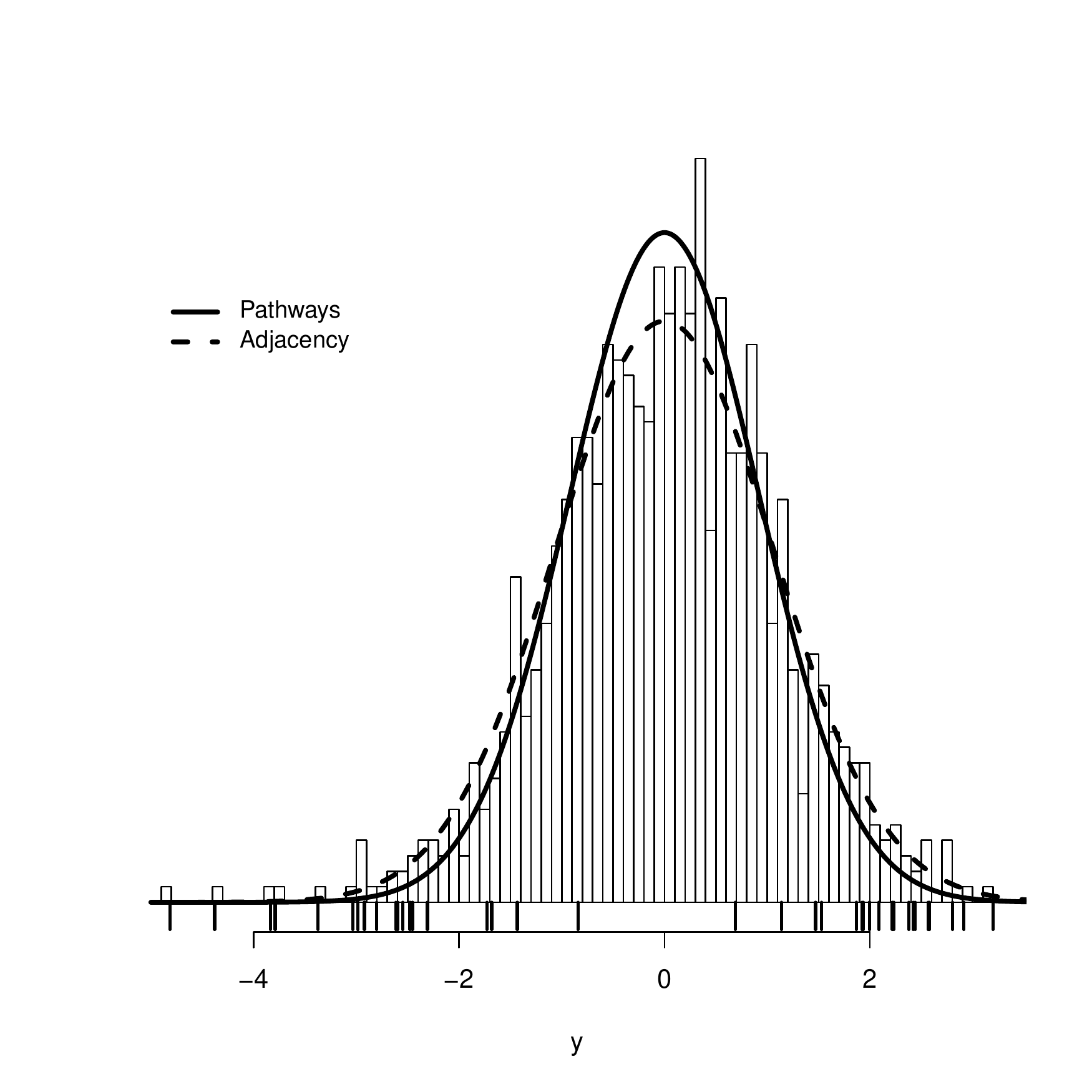}
        \label{fig:PathVsAdjHist}
\end{figure}

\begin{figure}[htb]  
    \centering
    \caption{Posterior inclusion probabilities estimated under the CAR testing model with and without the isolated cases. The dashed line is at the 0.95 threshold. Notice that several of the P5 cases have been pulled downward, resulting in more false non-discoveries by excluding the isolated cases.}
        \includegraphics[scale= 0.7, clip= TRUE, trim= 0in 0in 0in 0in]{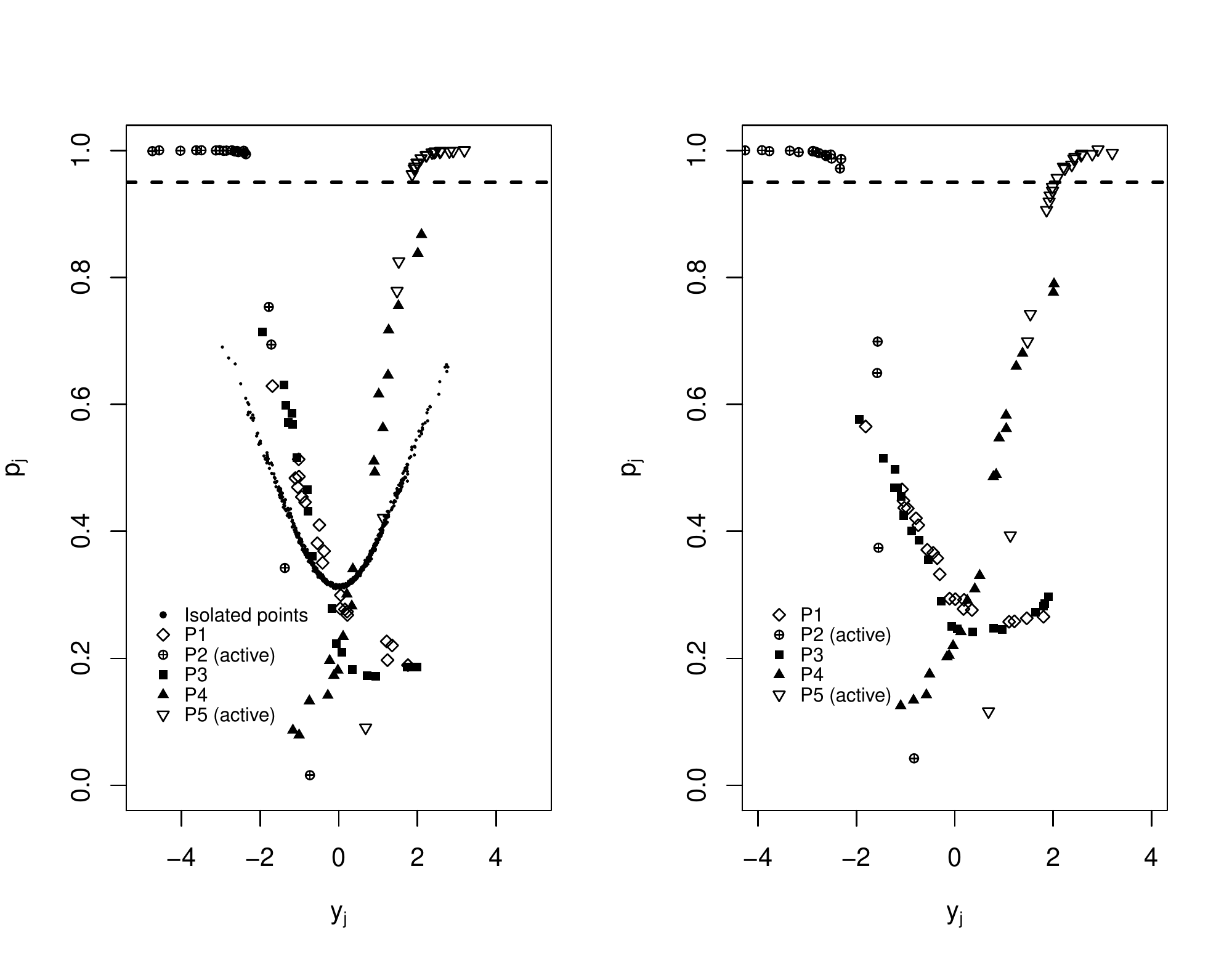}
        \label{fig:simPathFVsRProbs}
\end{figure}

\begin{figure}[htb]  
    \centering
    \caption{Histograms of realizations of Moran's $I$ calculated from the posterior predictive distribution of the CAR testing model, $p(I(\mathbf{y}^{\ast}) \mid \mathbf{y}) = \int_{\boldsymbol{\theta}}p(I(\mathbf{y}^{\ast}) \mid \boldsymbol{\theta})\pi(\boldsymbol{\theta} \mid \mathbf{y})d\boldsymbol{\theta}$, under partially incorrect correlation assumptions. The dark vertical line indicates the observed value of $I$ under the assumed neighborhood structure.}
        \includegraphics[scale= 0.7, clip= TRUE, trim= 0in 0in 0in 0in]{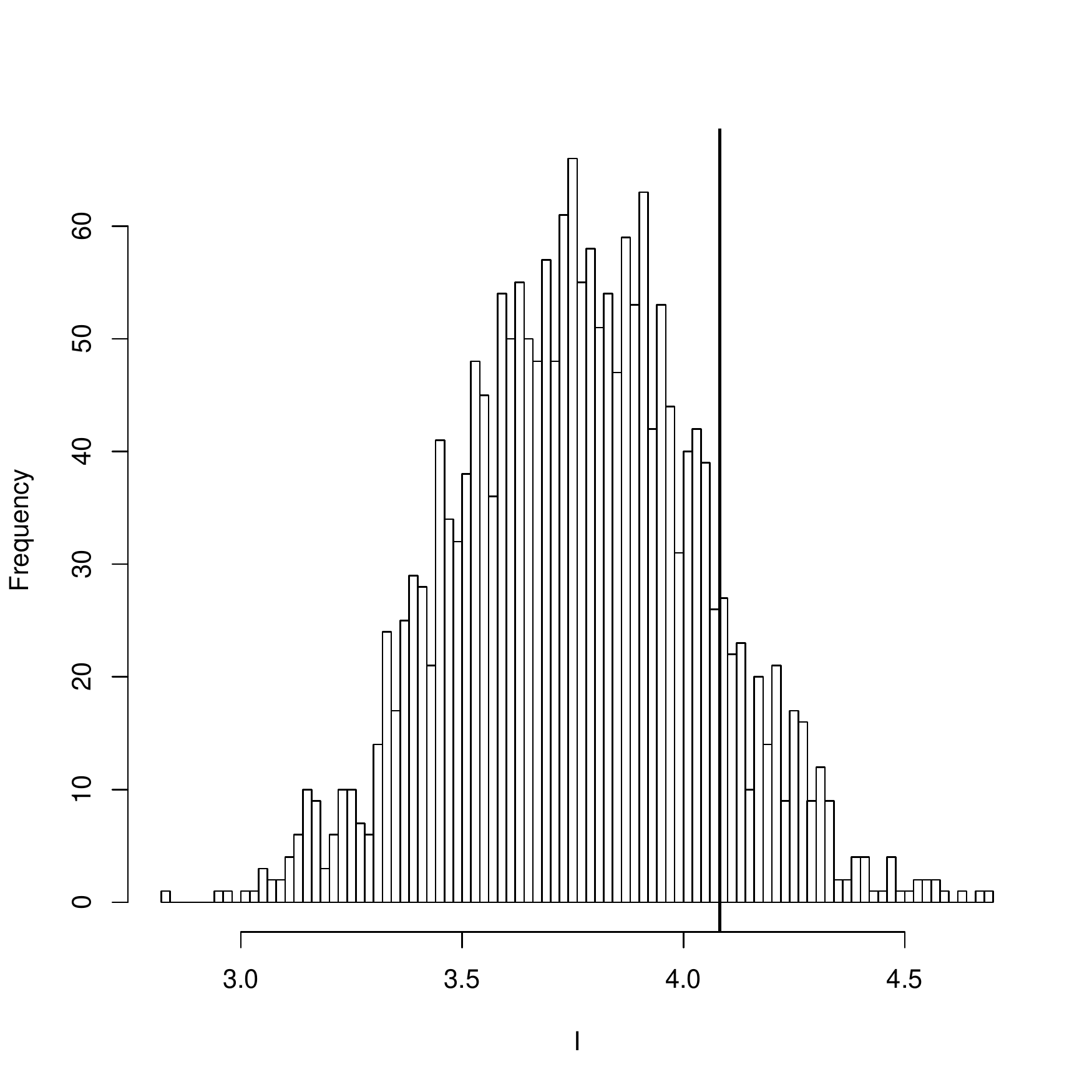}
        \label{fig:simCplxPPDMoranI}
\end{figure}

\begin{figure}[h!]  
    \centering
    \caption{Smoothed posterior densities of $p$ and $\rho$ for the E. Coli data (Xiao, Reilly, and Khodursky, 2009) with different values of $\alpha$ in the prior $p \sim \text{Beta}(\alpha, 1)$.}
        \includegraphics[scale= 0.6]{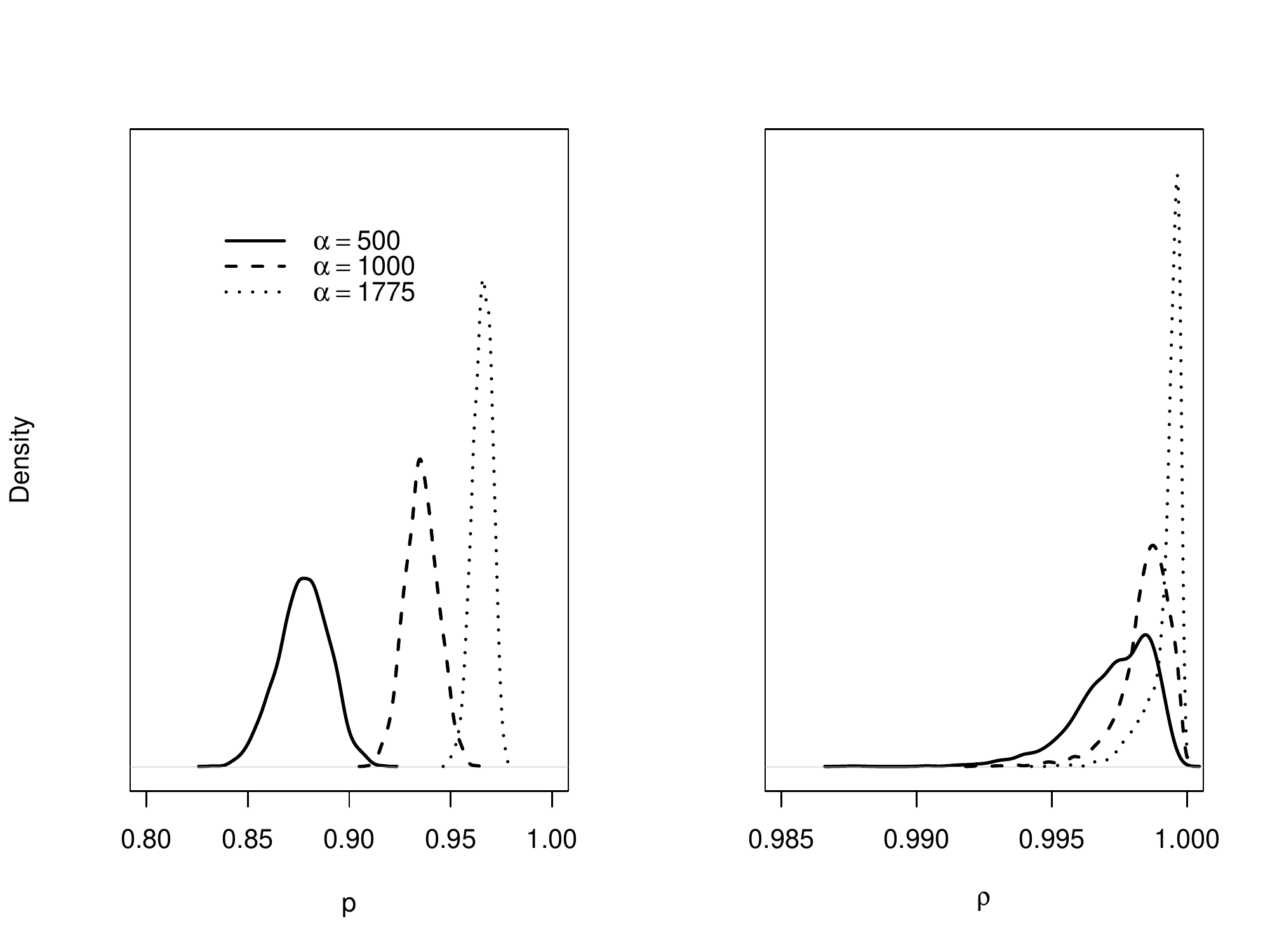}
        \label{fig:PostDensAlpha}
\end{figure}


%

\begin{figure}[h!]  
    \centering
    \caption{Estimated posterior inclusion probabilities for each of the 22,283 genes in the lymphoblastoid cell data (Subramanian, Tamayo, Mootha, Mukherjee, Ebert, Gillette, Paulovich, Pomeroy, Golub, Lander, and Meslrov, 2005) under both CAR testing models with and without isolated cases. The left panel results from excluding the isolated cases ($d = 0$), the right panel results from including isolated cases ($d = 1$). The horizontal lines represent the 0.99 threshold.}
        \includegraphics[scale= 0.6, clip= true, trim= 0in 0in 0in 0in]{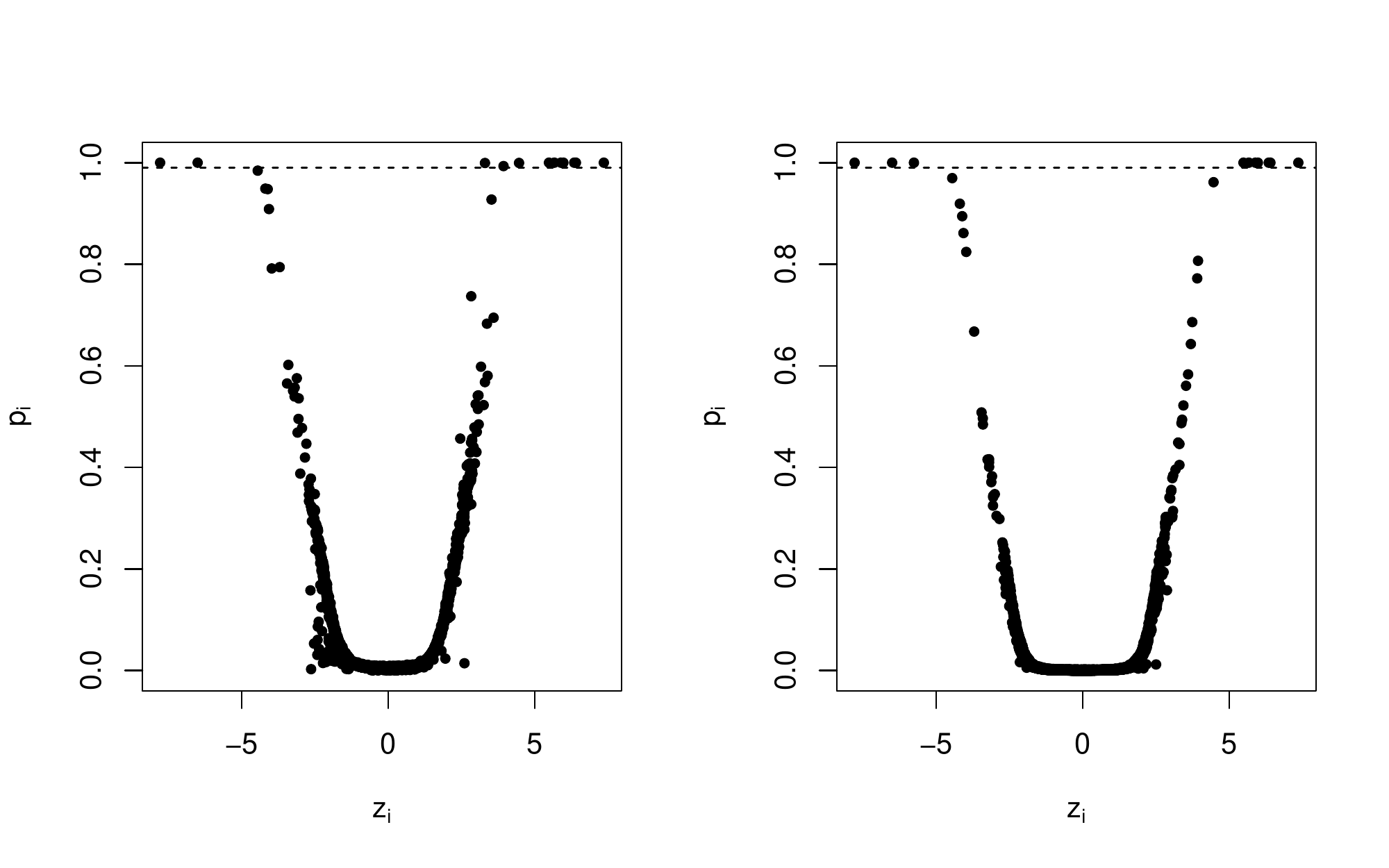}
        \label{fig:gseaGeneProbs}
\end{figure}

\end{document}